\documentclass[a4paper]{article}
\usepackage[utf8]{inputenc}
\usepackage[T1]{fontenc}
\usepackage{lmodern}
\usepackage[english]{babel}
\usepackage[unicode]{hyperref}
\usepackage{graphics}
\usepackage{stmaryrd}
\usepackage{slashbox}
\usepackage{mathpartir}
\usepackage{amsmath}
\usepackage{amsthm}
\usepackage[all,2cell,ps]{xy}
\xyoption{curve}
\UseAllTwocells
\SilentMatrices
\CompileMatrices

\newcommand{\catquot}[1]{#1/\!\!\!\equiv}

\newcommand{\strid}[1]{\cpdfinput{#1.ps}}
\newcommand{\sstrid}[1]{\strid{#1}}
\newcommand{\svxym}[1]{\vcenter{\xymatrix@C=7ex@R=7ex{#1}}}

\newcommand{\axioms}{\text{\emph{Ax}}}
\newcommand{\sgramor}{\ \ \ |\ \ \ }

\allowdisplaybreaks[1]

\usepackage{shortcuts}
\renewcommand{\wrt}{wrt\xspace}

\hyphenation{mo-noi-dal bi-mo-nad mo-noi-da-li-ty mo-noid en-do-func-tors a-na-ly-sis par-ti-cu-lar mo-noids ma-ni-pu-late}

\title{The Structure of First-Order Causality}

\hypersetup{
  pdftitle={\csname @title\endcsname},
  pdfauthor={Samuel Mimram},
  unicode=true,
  colorlinks=true,
  linkcolor=black,
  citecolor=black,
  urlcolor=black
}


\author{Samuel Mimram\thanks{This work was has been supported by the CHOCO
    (``Curry Howard pour la Concurrence'', \hbox{ANR-07-BLAN-0324}) French ANR
    project.}}
\date{}

\newcommand{\FinSet}{\mathbf{FinSet}}
\newcommand{\FMR}{\mathbf{MRel}}

\newcommand{\Games}{\mathbf{Games}}
\newcommand{\CGames}{\mathbf{CGames}}
\newcommand{\Alg}[2]{\mathbf{Alg}_{#1}^{#2}}
\newcommand{\intset}[1]{{#1}}

\newcommand{\moves}[1]{M_{#1}}

\newcommand{\qforall}[1]{\forall{#1}.}
\newcommand{\qexists}[1]{\exists{#1}.}

\newtheorem{definition}{Definition}

\newtheorem{lemma}[definition]{Lemma}
\newtheorem{property}[definition]{Property}

\newtheorem{theorem}[definition]{Theorem}

\newtheorem{example}[definition]{Example}

\theoremstyle{remark}
\newtheorem{remark}[definition]{Remark}

\renewcommand{\paragraph}[1]{\bigskip\noindent\textbf{#1}\;}

\newcommand{\eqth}[1]{\mathfrak{#1}}
\newcommand{\intp}[1]{\llbracket{#1}\rrbracket}
\newcommand{\card}[1]{\left|#1\right|}
\newcommand{\size}[1]{\left|#1\right|}
\newcommand{\lrule}[1]{\text{(#1)}}
\newcommand{\lrulel}[1]{\lrule{$#1$-L}}
\newcommand{\lruler}[1]{\lrule{$#1$-R}}
\newcommand{\before}{\varolessthan}

\begin{document}
\maketitle

\begin{abstract}
  Game semantics describe the interactive behavior of proofs by interpreting
  formulas as games on which proofs induce strategies. Such a semantics is
  introduced here for capturing dependencies induced by quantifications in
  first-order propositional logic. One of the main difficulties that has to be
  faced during the elaboration of this kind of semantics is to characterize
  definable strategies, that is strategies which actually behave like a
  proof. This is usually done by restricting the model to strategies satisfying
  subtle combinatorial conditions, whose preservation under composition is often
  difficult to show. Here, we present an original methodology to achieve this
  task, which requires to combine advanced tools from game semantics, rewriting
  theory and categorical algebra. We introduce a diagrammatic presentation of
  the monoidal category of definable strategies of our model, by the means of
  generators and relations: those strategies can be generated from a finite set
  of atomic strategies and the equality between strategies admits a finite
  axiomatization, this equational structure corresponding to a polarized
  variation of the notion of bialgebra. This work thus bridges algebra and
  denotational semantics in order to reveal the structure of dependencies
  induced by first-order quantifiers, and lays the foundations for a mechanized
  analysis of causality in programming languages.
\end{abstract}

Denotational semantics were introduced to provide useful abstract invariants of
proofs and programs modulo cut-elimination or reduction. In particular, game
semantics, introduced in the nineties, have been very successful in capturing
precisely the interactive behavior of programs. In these semantics, every type
is interpreted as a \emph{game} (that is as a set of \emph{moves} that can be
played during the game) together with the rules of the game (formalized by a
partial order on the moves of the game indicating the dependencies between
them). Every move is to be played by one of the two players, called
\emph{Proponent} and \emph{Opponent}, who should be thought respectively as the
program and its environment. The interactions between these two players are
sequences of moves respecting the partial order of the game, called
\emph{plays}. In this setting, a program is characterized by the set of plays
that it can exchange with its environment during an execution and thus defines a
\emph{strategy} reflecting the interactive behavior of the program inside the
game specified by the type of the program.

The notion of \emph{pointer game}, introduced by Hyland and
Ong~\cite{hyland-ong:full-abstraction-pcf}, gave one of the first fully abstract
models of PCF (a simply-typed \hbox{$\lambda$-calculus} extended with recursion,
conditional branching and arithmetical constants). It has revealed that PCF
programs generate strategies with partial memory, called \emph{innocent} because
they react to Opponent moves according to their own \emph{view} of the
play. Innocence
is in this setting the main ingredient to characterize \emph{definable}
strategies, that is strategies which are the interpretation of a PCF term,
because it describes the behavior of the purely functional core of the language
(\ie $\lambda$-terms), which also corresponds to proofs in propositional
logic. This seminal work has lead to an extremely successful series of
semantics: by relaxing in various ways the innocence constraint on strategies,
it became suddenly possible to generalize this characterization to PCF programs
extended with imperative features such as references, control, non-determinism,
etc.

Unfortunately, these constraints are quite specific to game semantics and remain
difficult to link with other areas of computer science or algebra. They are
moreover very subtle and combinatorial and thus sometimes difficult to work
with. This work is an attempt to find new ways to describe the behavior of
proofs.


\paragraph{Generating instead of restricting.}
In this paper, we introduce a game semantics capturing dependencies induced by
quantifiers in first-order propositional logic, forming a strict monoidal
category called $\Games$. Instead of characterizing definable strategies of the
model by restricting to strategies satisfying particular conditions, we show
here that we can equivalently use a kind of converse approach. We show how to
\emph{generate} definable strategies by giving a \emph{presentation} of those
strategies: a finite set of definable strategies can be used to generate all
definable strategies by composition and tensoring, and the equality between
strategies obtained this way can be finitely axiomatized.

What we mean precisely by a presentation is a generalization of the usual notion
of presentation of a monoid to monoidal categories. For example, consider the
additive monoid $\mathbb{N}^2=\mathbb{N}\times\mathbb{N}$. It admits the
presentation~$\pangle{\;p,q\;|\;qp=pq\;}$, where $p$ and $q$ are two
\emph{generators} and $qp=pq$ is a relation between two elements of the free
monoid $M$ on $\{p,q\}$. This means that $\mathbb{N}^2$ is isomorphic to the
free monoid $M$ on the two generators, quotiented by the smallest congruence
$\equiv$ (\wrt{} multiplication) such that $qp\equiv pq$.
More generally, a (strict) monoidal category~$\mathcal{C}$ (such as~$\Games$)
can be presented by a \emph{polygraph}, consisting of typed generators in
dimension 1 and~2 and relations in dimension 3, such that the
category~$\mathcal{C}$ is monoidally equivalent to the free monoidal category on
the generators, quotiented by the congruence generated by the relations.


\paragraph{Reasoning locally.}
The usefulness of our construction is both theoretic and practical. It reveals
that the essential algebraic structure of dependencies induced by quantifiers is
a polarized variation of the well-known structure of bialgebra, thus bridging
game semantics and algebra. It also proves very useful from a technical point of
view: this presentation allows us to reason locally about strategies. In
particular, it enables us to deduce a posteriori that these strategies actually
\emph{compose}, which is not trivial, and it also enables us to deduce that the
strategies of the category~$\Games$ are \emph{definable} (one only needs to
check that generators are definable). Finally, the presentation gives a finite
description of the category, that we can hope to manipulate with a computer,
paving the way for a series of new tools to automate the study of semantics of
programming languages.

\paragraph{A game semantics capturing first-order causality.}
Game semantics has revealed that proofs in logic describe particular strategies
to explore formulas, or more generally sequents. Namely, a formula (or a
sequent) is a syntactic tree expressing in which order its connectives must be
introduced in cut-free proofs. In this sense, it can be seen as the rules of a
game whose moves correspond to connectives. For instance, consider a sequent of
the form
\begin{equation}
  \label{eq:ex-formula}
  \qforall x P\quad\vdash\quad\qforall y\qexists z Q
\end{equation}
%
where $P$ and $Q$ are propositional formulas which may contain free
variables. When searching for a proof of~\eqref{eq:ex-formula}, the $\forall y$
quantification must be introduced before the $\exists z$ quantification, and the
$\forall x$ quantification can be introduced independently. Here, introducing an
existential quantification on the right of a sequent should be thought as
playing a Proponent move (the strategy gives a witness for which the formula
holds) and introducing an universal quantification as playing an Opponent move
(the strategy receives a term from its environment, for which it has to show
that the formula holds); introducing a quantification on the left of a sequent
is similar but with polarities inverted since it is the same as introducing the
dual quantification on the right of the sequent. So, the game associated to the
formula~\eqref{eq:ex-formula} will be the partial order on the first-order
quantifications appearing in the formula, depicted below (to be read from the
top to the bottom):
\begin{equation}
  \label{eq:ex-formula-game}
  \xymatrix@R=4ex@C=4ex{
    \forall x&\ar@{-}[d]\forall y\\
    &\exists z\\
  }
\end{equation}
This partial order is sometimes called the \emph{syntactic partial order}
generated by the sequent. Possible proofs of sequent~\eqref{eq:ex-formula} in
first-order propositional logic are of one of the three following shapes:
\[
\inferrule{
\inferrule{
\inferrule{
\inferrule{\vdots}
{P[t/x]\vdash Q[t'/z]}
}
{P[t/x]\vdash\qexists z Q}
}
{P[t/x]\vdash \qforall y\qexists z Q}
}{\qforall x P\vdash \qforall y\qexists z Q}
\qquad\qquad
\inferrule{
\inferrule{
\inferrule{
\inferrule{\vdots}
{P[t/x]\vdash Q[t'/z]}
}
{P[t/x]\vdash\qexists z Q}
}
{\qforall x P\vdash \qexists z Q}
}{\qforall x P\vdash \qforall y \qexists z Q}
\qquad\qquad
\inferrule{
\inferrule{
\inferrule{
\inferrule{\vdots}
{P[t/x]\vdash Q[t'/z]}
}
{\qforall x P\vdash Q[t'/z]}
}
{\qforall x P\vdash \qexists z Q}
}{\qforall x P\vdash \qforall y \qexists z Q}
\]
where $P[t/x]$ denotes the formula $P$ where every occurrence of the free
variable $x$ has been replaced by the term $t$. These proofs introduce the
connectives in the orders depicted respectively below
\[
\xymatrix@R=4ex@C=4ex{
\ar@{-}[d]\forall x\\
\ar@{-}[d]\forall y\\
\exists z\\
}
\qquad
\xymatrix@R=4ex@C=4ex{
\ar@{-}[d]\forall y\\
\ar@{-}[d]\forall x\\
\exists z\\
}
\qquad
\xymatrix@R=4ex@C=4ex{
\ar@{-}[d]\forall y\\
\ar@{-}[d]\exists z\\
\forall x\\
}
\]
which are all total orders extending the partial order of the
game~\eqref{eq:ex-formula-game}: these correspond to the plays in the strategies
interpreting the proofs in the game semantics. In this sense, they have more
dependencies between moves: proofs add causal dependencies between connectives.

Some sequentializations induced by proofs are not really relevant. For example
consider a proof of the form
\[
\inferrule{
  \inferrule{
    \inferrule{\pi}
    {P \vdash Q}
  }
  {P\vdash\qforall y Q}
}
{\qexists x P\vdash\qforall y Q}
\]
The order in which the introduction rules of the universal and existential
quantifications are introduced is not really significant here since this proof
might always be reorganized into the proof
\[
\inferrule{
  \inferrule{
    \inferrule{\pi}
    {P\vdash Q}
  }
  {\qexists x P\vdash Q}
}
{\qexists x P\vdash\qforall y Q}
\]
by ``permuting'' the introduction rules. Similarly, the following permutations
of rules are always possible:
\[
\inferrule{
  \inferrule{
    \inferrule{\pi}
    {P[t/x]\vdash Q[u/y]}
  }
  {P[t/x]\vdash \qexists y Q}
}
{\qforall x P\vdash\qexists y Q}
\rightsquigarrow
\inferrule{
  \inferrule{
    \inferrule{\pi}
    {P[t/x]\vdash Q[u/y]}
  }
  {\qforall x P\vdash Q[u/y]}
}
{\qforall x P\vdash\qexists y Q}
\qtand
\inferrule{
  \inferrule{
    \inferrule{\pi}
    {P[t/x]\vdash Q}
  }
  {P[t/x]\vdash\qforall y Q}
}
{\qforall x P\vdash\qforall y Q}
\rightsquigarrow
\inferrule{
  \inferrule{
    \inferrule{\pi}
    {P[t/x]\vdash Q}
  }
  {\qforall x P\vdash Q}
}
{\qforall x P\vdash\qforall y Q}
\]
Interestingly, the permutation
\[
\inferrule{
  \inferrule{
    \inferrule{\pi}
    {P\vdash Q[t/y]}
  }
  {P\vdash\qexists y Q}
}
{\qexists x P\vdash\qexists y Q}
\qquad
\rightsquigarrow
\qquad
\inferrule{
  \inferrule{
    \inferrule{\pi}
    {P\vdash Q[t/y]}
  }
  {\qexists x P\vdash Q[t/y]}
}
{\qexists x P\vdash\qexists y Q}
\]
is only possible if the term $t$ used in the introduction rule of the $\exists
y$ connective does not have $x$ as free variable. If the variable~$x$ is free in
$t$ then the rule introducing $\exists y$ can only be used after the rule
introducing the $\exists x$ connective. Now, the sequent~$\qexists x
P\vdash\qexists y Q$ will be interpreted by the following game
\[
\strid{dep_ex0}
\]
Whenever the~$\exists y$ connective depends on the~$\exists x$ connective (\ie
whenever~$x$ is free in the witness term~$t$ provided for~$y$), the strategy
corresponding to the proof will contain a causal dependency, which will be
depicted by an oriented wire
\[
\strid{dep_ex}
\]
and we sometimes say that the move~$\exists x$ \emph{justifies} the
move~$\exists y$. A simple further study of permutability of introduction rules
of first-order quantifiers shows that this is the only kind of relevant
dependencies.
These permutations of rules where the motivation for the introduction of
non-alternating asynchronous game semantics~\cite{mellies-mimram:ag5}, where
plays are considered modulo certain permutations of consecutive moves.
However, we focus here on causality and define strategies by the dependencies
they induce on moves (a precise description of the relation between these two
points of view was investigated in~\cite{mimram:phd}).
They are also very closely related to the motivations for the introduction of
Hintikka's games and independence friendly logic~\cite{hintikka-sandu:gts}.

We thus build a strict monoidal category whose objects are games and whose
morphisms are strategies, in which we can interpret formulas and proofs in the
connective-free fragment of first-order propositional logic, and write~$\Games$
for the subcategory of definable strategies. One should thus keep in mind the
following correspondences while reading this paper:
\begin{center}
  \begin{tabular}{|c|c|c|c|}
    \hline
    category&logic&game semantics&combinatorial objects\\
    \hline
    object&formula&game&syntactic order\\
    morphism&proof&strategy&justification order\\
    \hline
  \end{tabular}
\end{center}
This paper is devoted to the construction of a presentation for this
category. We introduce formally the notion of presentation of a monoidal
category in Section~\ref{sec:pres} and recall some useful classical algebraic
structures in Section~\ref{sec:alg-struct}. Then, we give a presentation of the
category of relations in Section~\ref{section:presentation-rel} and extend this
presentation to the category~$\Games$, that we define formally in
Section~\ref{section:games-strategies}.


\section{Presentations of monoidal categories}
\label{sec:pres}
We recall here briefly some basic definitions in category theory. The interested
reader can find a more detailed presentation of these concepts in MacLane's
reference book~\cite{maclane:cwm}.

\paragraph{Monoidal categories.}
A \emph{monoidal category} $(\mathcal{C},\otimes,I)$ is a category $\mathcal{C}$
together with a functor
\[
\otimes:\mathcal{C}\times\mathcal{C}\to\mathcal{C}
\]
and natural isomorphisms
\[
\alpha_{A,B,C}:(A\otimes B)\otimes C\to A\otimes(B\otimes C)
\tcomma\quad
\lambda_A:I\otimes A\to A
\qtand
\rho_A:A\otimes I\to A
\]
satisfying coherence axioms~\cite{maclane:cwm}. A symmetric monoidal category
$\mathcal{C}$ is a monoidal category $\mathcal{C}$ together with a natural
isomorphism
\[
\gamma_{A,B}:A\otimes B\to B\otimes A
\]
satisfying coherence axioms and such that
$\gamma_{B,A}\circ\gamma_{A,B}=\id_{A\otimes B}$. A monoidal
category~$\mathcal{C}$ is \emph{strictly} monoidal when the natural isomorphisms
$\alpha$, $\lambda$ and $\rho$ are identities. For the sake of simplicity, in
the rest of this paper we only consider strict monoidal categories. Formally, it
can be shown that it is not restrictive, using MacLane's coherence
theorem~\cite{maclane:cwm}: every monoidal category is monoidally equivalent to
a strict one.

A (strict) \emph{monoidal functor} $F:\mathcal{C}\to\mathcal{D}$ between two
strict monoidal categories $\mathcal{C}$ and $\mathcal{D}$ is a functor $F$
between the underlying categories
such that \hbox{$F(A\otimes B)=F(A)\otimes F(B)$} for every objects $A$ and $B$
of $\mathcal{C}$, and $F(I)=I$.
A \emph{monoidal natural transformation} $\theta:F\to G$ between two monoidal
functors $F,G:\mathcal{C}\to\mathcal{D}$ is a natural transformation between the
underlying functors $F$ and $G$ such that $\theta_{A\otimes
  B}=\theta_A\otimes\theta_B$ for every objects $A$ and $B$ of $\mathcal{C}$,
and $\theta_I=\id_I$.
Two monoidal categories $\mathcal{C}$ and $\mathcal{D}$ are \emph{monoidally
  equivalent} when there exists a pair of monoidal functors
$F:\mathcal{C}\to\mathcal{D}$ and $G:\mathcal{D}\to\mathcal{C}$ and two
invertible monoidal natural transformations $\eta:\mathrm{Id}_\mathcal{C}\to GF$
and $\varepsilon:FG\to\mathrm{Id}_\mathcal{D}$.

\paragraph{Monoidal theories.}
A \emph{monoidal theory} $\mathbb{T}$ is a strict monoidal category whose
objects are the natural integers, such that the tensor product on objects is the
addition of integers. By an integer $\underline{n}$, we mean here the finite
ordinal $\underline{n}=\{0,1,\ldots,n-1\}$ and the addition is given by
$\underline{m}+\underline{n}=\underline{m+n}$ (we will simply write~$n$ instead
of~$\underline{n}$ in the following). An \emph{algebra} $F$ of a monoidal theory
$\mathbb{T}$ in a strict monoidal category $\mathcal{C}$ is a strict monoidal
functor from~$\mathbb{T}$ to $\mathcal{C}$; we
write~$\Alg{\mathbb{T}}{\mathcal{C}}$ for the category of algebras from
$\mathbb{T}$ to $\mathcal{C}$ and monoidal natural transformations between
them. Monoidal theories are sometimes called PRO, this terminology was
introduced by MacLane in~\cite{maclane:ca} as an abbreviation for ``category
with products''. They generalize equational theories -- or Lawere
theories~\cite{lawvere:phd} -- in the sense that operations are typed and can
moreover have multiple outputs as well as multiple inputs, and are not
necessarily cartesian but only monoidal.

\paragraph{Presentations of monoidal categories.}
\label{subsection:moncat-presentation}
We now recall the notion of \emph{presentation} of a monoidal category by the
means of typed 1- and 2-dimensional generators and relations.

Suppose that we are given a set $E_1$ whose elements are called \emph{atomic
  types} or \emph{generators for objects}. We write $E_1^*$ for the free monoid
on the set $E_1$ and $i_1:E_1\to E_1^*$ for the corresponding injection; the
product of this monoid is written $\otimes$.
The elements of~$E_1^*$ are called \emph{types}. Suppose moreover that we are
given a set $E_2$, whose elements are called \emph{generators} (\emph{for
  morphisms}), together with two functions $s_1,t_1:E_2\to E_1^*$, which to
every generator associate a type called respectively its \emph{source} and
\emph{target}. We call a \emph{signature} such a 4-uple $(E_1,s_1,t_1,E_2)$:
\[
\svxym{
  E_1\ar[d]_{i_1}&\ar@<-0.7ex>[dl]_{s_1}\ar@<0.7ex>[dl]^{t_1}E_2\\
  E_1^*&\\
}
\]
\noindent
Every such signature $(E_1,s_1,t_1,E_2)$ generates a free strict monoidal
category $\mathcal{E}$, whose objects are the elements of~$E_1^*$ and whose
morphisms are formal composite and formal tensor products of elements of $E_2$,
quotiented by suitable laws imposing associativity of composition and tensor and
compatibility of composition with tensor, see~\cite{burroni:higher-word}. If we
write $E_2^*$ for the morphisms of this category and \hbox{$i_2:E_2\to E_2^*$}
for the injection of the generators into this category, we get a diagram
\[
\svxym{
  E_1\ar[d]_{i_1}&\ar@<-0.7ex>[dl]_{s_1}\ar@<0.7ex>[dl]^{t_1}E_2\ar[d]_{i_2}\\
  E_1^*&\ar@<-0.7ex>[l]_{\overline{s_1}}\ar@<0.7ex>[l]^{\overline{t_1}}E_2^*\\
}
\]
in $\Set$ together with a structure of monoidal category~$\mathcal{E}$ on the
graph
\[
\xymatrix@C=10ex@R=10ex{
  E_1^*&\ar@<-0.7ex>[l]_{\overline{s_1}}\ar@<0.7ex>[l]^{\overline{t_1}}E_2^*\\
}
\]
where the morphisms $\overline{s_1},\overline{t_1}:E_2^*\to E_1^*$ are the
morphisms (unique by universality of $E_2^*$) such that
\hbox{$s_1=\overline{s_1}\circ i_2$} and \hbox{$t_1=\overline{t_1}\circ i_2$}.
The \emph{size} $\size{f}$ of a morphism $f:A\to B$ in $E_2^*$ is defined
inductively by
\[
\begin{array}{r@{ = }l@{\qquad}r@{ = }l}
  \size{\id}&0
  &
  \size{f}&1
  \text{\quad if $f$ is a generator}
  \\
  \size{f_1\otimes f_2}
  &
  \size{f_1}+\size{f_2}
  &
  \size{f_2\circ f_1}
  &
  \size{f_1}+\size{f_2}
\end{array}
\]
In particular, a morphism is of size $0$ if and only if it is an identity.

Our constructions are an instance in dimension 2 of Burroni's polygraphs
\cite{burroni:higher-word}, and Street's 2\nbd
computads~\cite{street:limit-indexed-by-functors}, who made precise the sense in
which the generated monoidal category is free on the signature. Namely, the
following notion of equational theory is a specialization of the definition of a
3-polygraph to the case where there is only one generator for 0-cells.


\begin{definition}
  A \textbf{monoidal equational theory} is a 7-uple
  \[
  \eqth{E}=(E_1,s_1,t_1,E_2,s_2,t_2,E_3)
  \]
  where $(E_1,s_1,t_1,E_2)$ is a signature together with a set $E_3$ of
  \emph{relations} and two morphisms $s_2,t_2:E_3\to E_2^*$, as pictured in the
  diagram
  \[
  \svxym{
    E_1\ar[d]_{i_1}&\ar@<-0.7ex>[dl]_{s_1}\ar@<0.7ex>[dl]^{t_1}E_2\ar[d]_{i_2}&\ar@<-0.7ex>[dl]_{s_2}\ar@<0.7ex>[dl]^{t_2}E_3\\
    E_1^*&\ar@<-0.7ex>[l]_{\overline{s_1}}\ar@<0.7ex>[l]^{\overline{t_1}}E_2^*\\
  }
  \]
  such that $\overline{s_1}\circ s_2=\overline{s_1}\circ t_2$ and
  $\overline{t_1}\circ s_2=\overline{t_1}\circ t_2$.
\end{definition}
Every equational theory defines a monoidal category
$\mathbb{E}=\catquot{\mathcal{E}}$ obtained from the monoidal category
$\mathcal{E}$ generated by the signature $(E_1,s_1,t_1,E_2)$ by quotienting the
morphisms by the congruence $\equiv$ generated by the relations of the
equational theory $\eqth{E}$: it is the smallest congruence (\wrt{} both
composition and tensoring) such that $s_2(e)\equiv t_2(e)$ for every element $e$
of $E_3$.

We say that a monoidal equational theory $\eqth{E}$ is a \emph{presentation} of
a strict monoidal category $\mathcal{M}$ when $\mathcal{M}$ is monoidally
equivalent to the category $\mathbb{E}$ generated by~$\eqth{E}$. Any monoidal
category~$\mathcal{M}$ admits a presentation (for example, the trivial
presentation with~$E_1$ the set of objects of~$\mathcal{M}$, $E_2$ the set of
morphisms of~$\mathcal{M}$, and~$E_3$ the set of all equalities between
morphisms holding in~$\mathcal{M}$), which is not unique in general.
In such a presentation, the category $\mathcal{E}$ generated by the signature
underlying~$\eqth{E}$ should be thought as a category of ``terms'' (which will
be considered modulo the relations described by~$E_2$) and is thus sometimes
called the \emph{syntactic category} of~$\eqth{E}$.

We sometimes informally say that an equational theory has a \emph{generator}
$f:A\to B$ to mean that $f$ is an element of $E_2$ such that $s_1(f)=A$ and
$t_1(f)=B$. We also say that the equational theory has a \emph{relation} $f=g$
to mean that there exists an element $e$ of $E_3$ such that $s_2(e)=f$
and~$t_2(e)=g$.


We say that two equational theories are \emph{equivalent} when they generate
monoidally equivalent categories.
A generator $f$ in an equational theory $\eqth{E}$ is \emph{superfluous} when
the equational theory $\eqth{E'}$ obtained from $\eqth{E}$ by removing the
generator $f$ and all equations involving $f$, is equivalent to
$\eqth{E}$. Similarly, an equation $e$ is \emph{superfluous} when the
equational theory $\eqth{E'}$ obtained from $\eqth{E}$ by removing the
equation $e$ is equivalent to $\eqth{E}$. An equational theory is
\emph{minimal} when it does not contain any superfluous generator or equation.

Notice that every monoidal equational theory $(E_1,s_1,t_1,E_2,s_2,t_2,E_3)$
where the set~$E_1$ is reduced to only one object $\{1\}$ generates a monoidal
category which is a monoidal theory ($\N$ is the free monoid on one object),
thus giving a notion of presentation of those categories.

\paragraph{Presented categories as models.}
Suppose that a strict monoidal category $\mathcal{M}$ is presented by an
equational theory $\eqth{E}$, generating a category
$\mathbb{E}=\catquot{\mathcal{E}}$. The proof that $\eqth{E}$
presents~$\mathcal{M}$ can generally be decomposed in two parts:
\begin{enumerate}
\item \emph{$\mathcal{M}$ is a model of the equational theory $\eqth{E}$}: there
  exists a functor \hbox{$M:\mathbb{E}\to\mathcal{M}$}. This amounts to checking
  that there exists a functor \hbox{$M':\mathcal{E}\to\mathcal{M}$} such that
  for all morphisms \hbox{$f,g:A\to B$} in $\mathcal{E}$, $f\equiv g$ implies
  $M'f=M'g$.
\item \emph{$\mathcal{M}$ is a fully-complete model of the equational theory
    $\eqth{E}$}: the functor $M$ is full and faithful.
\end{enumerate}
We sometimes say that a morphism $f:A\to B$ of $\mathbb{E}$ \emph{represents}
the morphism \hbox{$M{f}:M{A}\to M{B}$} of~$\mathcal{M}$.

Usually, the first point is a straightforward verification. Proving that the
functor $M$ is full and faithful often requires more work. In this paper, we use
the methodology introduced by Burroni~\cite{burroni:higher-word} and refined by
Lafont~\cite{lafont:boolean-circuits}. We first define \emph{canonical forms}
which are canonical representatives of the equivalence classes of morphisms of
$\mathcal{E}$ under the congruence $\equiv$ generated by the relations of
$\eqth{E}$. Proving that every morphism is equal to a canonical form can be done
by induction on the size of the morphisms. Then, we show that the functor $M$ is
full and faithful by showing that the canonical forms are in bijection with the
morphisms of~$\mathcal{M}$.

It should be noted that this is not the only technique to prove that an
equational theory presents a monoidal category. In particular, Joyal and Street
have used topological methods~\cite{joyal-street:geometry-tensor-calculus} by
giving a geometrical construction of the category generated by a signature, in
which morphisms are equivalence classes under continuous deformation of
progressive plane diagrams (we give some more details about those diagrams, also
called string diagrams, later on). Their work is for example extended by Baez
and Langford in~\cite{baez-langford:two-tangles} to give a presentation of the
2-category of 2-tangles in 4~dimensions. The other general methodology the
author is aware of, is given by Lack in~\cite{lack:composing-props}, by
constructing elaborate monoidal theories from simpler monoidal theories. Namely,
a monoidal theory can be seen as a monad in a particular span bicategory, and
monoidal theories can therefore be ``composed'' given a distributive law between
their corresponding monads. We chose not to use those methods because, even
though they can be very helpful to build intuitions, they are difficult to
formalize and even more to mechanize: we believe indeed that some of the tedious
proofs given in this paper could be somewhat automated, a first step in this
direction was given in~\cite{mimram:rta10} where we describe an algorithm to
compute critical pairs in polygraphic rewriting systems of dimension 2.

\paragraph{String diagrams.}
\label{subsection:string-diagrams}
\emph{String diagrams} provide a convenient way to represent and manipulate the
morphisms in the category generated by a presentation. Given an object $M$ in a
strict monoidal category $\mathcal{C}$, a morphism $\mu:M\otimes M\to M$ can be
drawn graphically as a device with two inputs and one output of type $M$ as
follows:
\[
\strid{mult_m_label}
\qquad\text{or simply as}\qquad
\strid{mult_m}
\]
when it is clear from the context which morphism of type $M\otimes M\to M$ we
are picturing (we sometimes even omit the source and target of the
morphisms). Similarly, the identity $\id_M:M\to M$ (which we sometimes simply
write $M$) can be pictured as a wire
\[
\strid{id_m}
\]
The tensor $f\otimes g$ of two morphisms $f:A\to B$ and \hbox{$g:C\to D$} is
obtained by putting the diagram corresponding to $f$ above the diagram
corresponding to~$g$.
So, for instance, the morphism $\mu\otimes M$
can be drawn diagrammatically as
\[
\strid{mult_x_id_m}
\]
Finally, the composite $g\circ f:A\to C$ of two morphisms $f:A\to B$ and $g:B\to
C$ can be drawn diagrammatically by putting the diagram corresponding to $g$ at
the right of the diagram corresponding to $f$ and ``linking the wires''.
The diagram corresponding to the morphism $\mu\circ(\mu\otimes M)$
is thus
\[
\strid{mult_assoc_l_m}
\]

Suppose that $(E_1,s_1,t_1,E_2)$ is a signature. Every element $f$ of $E_2$ such
that
\[
s_1(f)=A_1\otimes\cdots\otimes A_m
\qtand
t_1(f)=B_1\otimes\cdots\otimes B_n
\]
where the $A_i$ and $B_i$ are elements of $E_1$, can be similarly represented by
a diagram
\[
\strid{signature_f}
\]
where wires correspond to generators for objects and circled points to
generators for morphisms. Bigger diagrams can be constructed from these diagrams
by composing and tensoring them, as explained above. Joyal and Street have shown
in details in~\cite{joyal-street:geometry-tensor-calculus} that the category of
those diagrams, modulo continuous deformations, is precisely the free category
generated by a signature (which they call a ``tensor scheme''). For example, the
equality
\[
(M\otimes\mu)\circ(\mu\otimes M\otimes M)
\qeq
(\mu\otimes M)\circ(M\otimes M\otimes\mu)
\]
in the category $\mathcal{C}$ of the above example, which holds because of the
axioms satisfied in any monoidal category, can be shown by continuously
deforming the diagram on the left-hand side below into the diagram on the
right-hand side:
\[
\sstrid{mu_x_mu_r}
\qeq
\sstrid{mu_x_mu_l}
\]
All the equalities satisfied in any monoidal category generated by a signature
have a similar geometrical interpretation. And conversely, any deformation of
diagrams corresponds to an equality of morphisms in monoidal categories.

\section{Algebraic structures}
\label{sec:alg-struct}
In this section, we recall the categorical formulation of some well-known
algebraic structures, the most fundamental in this work being maybe the notion
of \emph{bialgebra}. We give those definitions in the setting of a strict
monoidal category which is \emph{not} required to be symmetric. We suppose that
$(\mathcal{C},\otimes,I)$ is a strict monoidal category, fixed throughout the
section.

\paragraph{Symmetric objects.}
A \emph{symmetric object} of $\mathcal{C}$ is an object~$S$ together with a
morphism
\[
\gamma:S\otimes S\to S\otimes S
\]
called \emph{symmetry} and pictured as
\begin{equation}
  \label{eq:sym-string}
  \strid{sym_s}
\end{equation}
such that the diagrams
\[
\xymatrix{
  S\otimes S\otimes S\ar[d]_{S\otimes\gamma}\ar[r]^{\gamma\otimes S}&S\otimes S\otimes S\ar[r]^{S\otimes\gamma}&S\otimes S\otimes S\ar[d]^{\gamma\otimes S}\\
  S\otimes S\otimes S\ar[r]_{\gamma\otimes S}&S\otimes S\otimes S\ar[r]_{S\otimes\gamma}&S\otimes S\otimes S\\
}
\]
and
\[
\xymatrix{
  &S\otimes S\ar[dr]^{\gamma}&\\
  S\otimes S\ar[ur]^{\gamma}\ar[rr]_{S\otimes S}&&S\otimes S\\
}
\]
commute. Graphically,
\begin{equation}
  \label{eq:sym}
  \begin{array}{rcl}
    \sstrid{yang_baxter_r}
    &\qeq&
    \sstrid{yang_baxter_l}
    \\
    \sstrid{sym_sym}
    &\qeq&
    \sstrid{id_x_id}
  \end{array}
\end{equation}
(the first equation is sometimes called the Yang-Baxter equation for braids). In
particular, in a symmetric monoidal category, every object is canonically
equipped with a structure of symmetric object.

\paragraph{Monoids.}
\label{subsection:monoids}
A \emph{monoid} $(M,\mu,\eta)$ in $\mathcal{C}$ is an object $M$ together with
two morphisms
\[
\mu : M\otimes M\to M
\qtand
\eta : I\to M
\]
called respectively \emph{multiplication} and \emph{unit} and pictured respectively as
\begin{equation}
  \label{eq:monoid-string}
  \strid{mult_m}
  \qtand
  \strid{unit_m}
\end{equation}
such that the diagrams
\[
\vxym{
  M\otimes M\otimes M\ar[d]_{M\otimes\mu}\ar[r]^-{\mu\otimes M}&M\otimes M\ar[d]^{\mu}\\
  M\otimes M\ar[r]_{\mu}&M
}
\tand
\vxym{
  \ar[dr]_{M}I\otimes M\ar[r]^{\eta\otimes M}&M\otimes M\ar[d]_{\mu}&\ar[l]_{M\otimes\eta}M\otimes I\ar[dl]^{M}\\
  &M&
}
\]
commute. Graphically,
\begin{equation}
  \label{eq:monoid}
  \begin{array}{c}
    \strid{mult_assoc_l}
    \qeq
    \strid{mult_assoc_r}
    \\
    \sstrid{mult_unit_l}
    \qeq
    \sstrid{mult_unit_c}
    \qeq
    \sstrid{mult_unit_r}
  \end{array}
\end{equation}

A \emph{symmetric monoid} is a monoid equipped with a symmetry morphism
\hbox{$\gamma:M\otimes M\to M\otimes M$} which is compatible with the operations
of the monoid in the sense that it makes the diagrams
\begin{equation*}
  \begin{array}{c}
    \xymatrix{
      \ar[d]_{\mu\otimes M}M\otimes M\otimes M\ar[r]^{M\otimes \gamma}&M\otimes M\otimes M\ar[r]^{\gamma\otimes M}&M\otimes M\otimes M\ar[d]^{M\otimes\mu}\\
      M\otimes M\ar[rr]_{\gamma}&&M\otimes M\\
    }
    \\[4ex]
    \xymatrix{
      \ar[d]_{M\otimes \mu}M\otimes M\otimes M\ar[r]^{\gamma\otimes M}&M\otimes M\otimes M\ar[r]^{M\otimes\gamma}&M\otimes M\otimes M\ar[d]^{\mu\otimes M}\\
      M\otimes M\ar[rr]_{\gamma}&&M\otimes M\\
    }
    \\[4ex]
    \xymatrix{
      &M\otimes M\ar[dr]^{\gamma}&\\
      M\ar[ur]^{\eta\otimes M}\ar[rr]_{\eta\otimes M}&&M\otimes M\\
    }
    \qquad
    \xymatrix{
      &M\otimes M\ar[dr]^{\gamma}&\\
      M\ar[ur]^{M\otimes\eta}\ar[rr]_{M\otimes\eta}&&M\otimes M\\
    }
    \\
  \end{array}
\end{equation*}
commute. Graphically,
\begin{equation}
  \label{eq:monoid-nat}
  \begin{array}{cc}
    \strid{mult_sym_rnat_r}
    \qeq
    \strid{mult_sym_rnat_l}
    \\
    \strid{eta_sym_rnat_l}
    \qeq
    \strid{eta_sym_rnat_r}
  \end{array}
\end{equation}
are satisfied, as well as the equations obtained by turning the diagrams
upside-down. A \emph{commutative monoid} is a symmetric monoid such that
the diagram
\[
\vxym{
  &M\otimes M\ar[dr]^{\mu}&\\
  M\otimes M\ar[ur]^{\gamma}\ar[rr]_{\mu}&&M
}
\]
commutes. Graphically,
\begin{equation}
  \label{eq:monoid_mult_comm}
  \strid{mult_comm}
  \qeq
  \strid{mult}
\end{equation}
In particular, a commutative monoid in a symmetric monoidal category is a
commutative monoid whose symmetry corresponds to the symmetry of the category:
$\gamma=\gamma_{M,M}$. In this case, the equations \eqref{eq:monoid-nat} can
always be deduced from the naturality of the symmetry of the monoidal category.

A \emph{comonoid} $(M,\delta,\varepsilon)$ in $\mathcal{C}$ is an object $M$
together with two morphisms
\[
\delta:M\to M\otimes M
\qtand
\varepsilon:M\to I
\]
respectively drawn as
\begin{equation}
  \label{eq:comonoid-string}
  \strid{comult_m}
  \qqtand
  \strid{counit_m}
\end{equation}
satisfying dual coherence diagrams. Similarly, the notions symmetric comonoid
and cocommutative comonoid can be defined by duality.

The definition of a monoid can be reformulated internally, in the language of
equational theories:
\begin{definition}
  \label{definition:e-t-monoid}
  The \emph{equational theory of monoids} $\eqth{M}$ has one generator for
  objects $1$ and two generators for morphisms $\mu:2\to 1$ and $\eta:0\to 1$
  subject to the three relations
  \begin{equation}
    \label{eq:monoid-theory}
    \begin{array}{c}
      \mu\circ(\mu\otimes\id_1)
      \qeq
      \mu\circ(\id_1\otimes\mu)
      \\
      \mu\circ(\eta\otimes\id_1)
      \qeq
      \id_1
      \qeq
      \mu\circ(\id_1\otimes\eta)
    \end{array}
  \end{equation}
\end{definition}
\noindent
The equations~\eqref{eq:monoid-theory} correspond precisely to the equations for
a monoid object~\eqref{eq:monoid}. If we write $\mathbb{M}$ for the monoidal
category generated by the equational theory $\eqth{M}$, the algebras of
$\mathbb{M}$ in a strict monoidal category~$\mathcal{C}$ are precisely its
monoids: the category $\Alg{\mathbb{M}}{\mathcal{C}}$ of algebras of the
monoidal theory $\mathbb{M}$ in $\mathcal{C}$ is monoidally equivalent to the
category of monoids in~$\mathcal{C}$. Similarly, all the algebraic structures
introduced in this section can be defined using algebraic theories.

\begin{remark}
  The presentations given here are not necessarily minimal. For example, in the
  theory of commutative monoids one equation for units of
  monoids~\eqref{eq:monoid-string} is derivable from the
  equation~\eqref{eq:monoid_mult_comm}, one of the
  equations~\eqref{eq:monoid-nat} and one of the equations for units of
  monoids~\eqref{eq:monoid-string}:
  \[
  \sstrid{mult_unit_r}
  \!\qeq\!
  \sstrid{mult_unit_r_sym}
  \!\qeq\!
  \sstrid{mult_unit_l}
  \!\qeq\!
  \sstrid{mult_unit_c}
  \]
  A minimal presentation of this equational theory with three generators and
  seven equations is given in~\cite{massol:minimality}. However, not all the
  equational theories introduced in this paper have a known presentation which
  is proved to be minimal.
\end{remark}

\paragraph{Bialgebras.}
A \emph{bialgebra} $(B,\mu,\eta,\delta,\varepsilon,\gamma)$ in $\mathcal{C}$ is
an object $B$ together with five morphisms
\[
\begin{array}{r@{\ :\ }l@{\qquad}r@{\ :\ }l}
  \mu&B\otimes B\to B
  &
  \eta&I\to B\\
  \delta&B\to B\otimes B
  &
  \varepsilon&B\to I\\
\end{array}
\ \tand\ 
\gamma : B \otimes B\to B\otimes B
\]
such that $\gamma:B\otimes B\to B\otimes B$ is a symmetry for $B$,
$(B,\mu,\eta,\gamma)$ is a symmetric monoid and $(B,\delta,\varepsilon,\gamma)$
is a symmetric comonoid. The morphism $\gamma$ is thus pictured as
in~\eqref{eq:sym-string}, $\mu$ and $\eta$ as in \eqref{eq:monoid-string},
and~$\delta$ and~$\varepsilon$ as in~\eqref{eq:comonoid-string}. Those two
structures should be coherent, in the sense that the diagrams
\[
\begin{array}{c@{\hspace{-3ex}}c}
  \xymatrix{
    B\otimes B\ar[d]_{\delta\otimes\delta}\ar[r]^-{\mu}&B\ar[r]^-{\delta}&B\otimes B\\
    B\otimes B\otimes B\otimes B\ar[rr]_{B\otimes\gamma\otimes B}&&\ar[u]_{\mu\otimes\mu}B\otimes B\otimes B\otimes B\\
  }&
  \xymatrix{
    &B\ar[dr]^{\varepsilon}&\\
    I\ar[ur]^{\eta}\ar[rr]_{I}&&I
  }\\
  \xymatrix{
    &B\ar[dr]^{\varepsilon}&\\
    B\otimes B\ar[ur]^{\mu}\ar[rr]_{\varepsilon\otimes\varepsilon}&&I\otimes I=I
  }&
  \xymatrix{
    &B\ar[dr]^{\delta}&\\
    I=I\otimes I\ar[ur]^{\eta}\ar[rr]_{\eta\otimes\eta}&&B\otimes B
  }
\end{array}
\]
should commute. Graphically,
\begin{equation}
  \label{eq:bialg}
  \begin{array}{c}
    \sstrid{hopf_l}
    =
    \sstrid{hopf_r}
    \\
    \sstrid{counit_mult}
    =
    \sstrid{counit_x_counit}
    \qquad
    \sstrid{comult_unit}
    =
    \sstrid{unit_x_unit}
    \qquad
    \sstrid{unit_counit}=
  \end{array}
\end{equation}
should be satisfied.

A bialgebra is \emph{commutative} (\resp \emph{cocommutative}) when the induced
symmetric monoid $(B,\mu,\eta,\gamma)$ (\resp symmetric comonoid
$(B,\delta,\varepsilon,\gamma)$) is commutative (\resp cocommutative), and
\emph{bicommutative} when it is both commutative and cocommutative. A bialgebra
is \emph{qualitative}
when the diagram
\[
\xymatrix{
  &B\otimes B\ar[dr]^{\mu}&\\
  B\ar[ur]^{\delta}\ar[rr]_{B}&&B
}
\]
commutes. Graphically,
\begin{equation}
  \label{eq:qualitative}
  \strid{rel_l}
  \qeq
  \strid{rel_r}
\end{equation}



\begin{definition}
  We write $\eqth{B}$ for the \emph{equational theory of bicommutative
    bialgebras}. It has one generator for objects~$1$, five generators for
  morphisms
  \[
  \begin{array}{r@{\ :\ }l@{\qquad}r@{\ :\ }l}
    \mu&2\to 1
    &
    \eta&0\to 1\\
    \delta&1\to 2
    &
    \varepsilon&1\to 0\\
  \end{array}
  \ \tand\ 
  \gamma : 2\to 2
  \]
  and twenty-two relations: the two relations of symmetry~\eqref{eq:sym}, the
  eight relations of commutative monoids~\eqref{eq:monoid} \eqref{eq:monoid-nat}
  \eqref{eq:monoid_mult_comm}, the eight relations of cocommutative comonoids
  which are dual of~\eqref{eq:monoid} \eqref{eq:monoid-nat}
  \eqref{eq:monoid_mult_comm}, and the four compatibility relations for
  bialgebras~\eqref{eq:bialg}.

  We also write $\eqth{R}$ for the \emph{equational theory of qualitative
    bicommutative bialgebras} which is defined as~$\eqth{B}$, with the
  relation~\eqref{eq:qualitative} added.
\end{definition}

\paragraph{Dual objects.}
An object $L$ of $\mathcal{C}$ is said to be \emph{left dual} to an object $R$
when there exists two morphisms
\[
\eta:I\to R\otimes L
\qtand
\varepsilon:L\otimes R\to I
\]
called respectively the \emph{unit} and the \emph{counit} of the duality and
respectively pictured as
\[
\strid{adj_unit_lr}
\qtand
\strid{adj_counit_lr}
\]
making the diagrams
\[
\vxym{
  &L\otimes R\otimes L\ar[dr]^{\varepsilon\otimes L}&\\
  L\ar[ur]^{L\otimes\eta}\ar[rr]_L&&L\\
}
\qtand
\vxym{
  &R\otimes L\otimes R\ar[dr]^{R\otimes\varepsilon}&\\
  R\ar[ur]^{\eta\otimes R}\ar[rr]_{R}&&R\\
}
\]
commute. Graphically,
\[
\strid{zig_zag_l}
=
\strid{id_L}
\!\qtand\!
\strid{zig_zag_r}
=
\strid{id_R}
\]
We write~$\eqth{D}$ for the equational theory associated to dual objects.

\begin{remark}
  If $\mathcal{C}$ is a category, two dual objects in the monoidal category
  $\mathrm{End}(\mathcal{C})$ of endofunctors of $\mathcal{C}$, with tensor
  product given on objects by composition of functors, are adjoint endofunctors
  of $\mathcal{C}$. The theory of adjoint functors in a 2-category is described
  precisely in~\cite{street:free-adj}, the definition of $\eqth{D}$ is a
  specialization of this construction to the case where there is only one
  0-cell.
\end{remark}

\section{Presenting the category of relations}
\label{section:presentation-rel}
We now introduce a presentation of the category $\Rel$ of finite ordinals and
relations, by refining presentations of simpler categories. This result is
mentioned in Examples~6 and~7 of~\cite{hyland-power:symmetric-monoidal-sketches}
and is proved in three different ways
in~\cite{lafont:equational-reasoning-diagrams}, \cite{pirashvili:bialg-prop}
and~\cite{lack:composing-props}. The methodology adopted here to build this
presentation has the advantage of being simple to check (although very
repetitive) and can be extended to give the presentation of the category of
games and strategies described in Section~\ref{subsection:walking-inno}.

\paragraph{The simplicial category.}
The simplicial category $\Delta$ is the monoidal theory whose morphisms
$f:\intset{m}\to\intset{n}$ are the monotone functions from $\intset{m}$ to
$\intset{n}$. It has been known for a long time that this category is closely
related to the notion of monoid, see~\cite{maclane:cwm}
or~\cite{lafont:boolean-circuits} for example. This result can be formulated as
follows:
\begin{property}
  \label{property:delta-presentation}
  The monoidal category $\Delta$ is presented by the equational theory of
  monoids~$\eqth{M}$.
\end{property}
\noindent
In this sense, the simplicial category $\Delta$ impersonates the notion of
monoid. We extend here this result to more complex categories.




\paragraph{Multirelations.}
A \emph{multirelation} $R$ between two finite sets~$A$ and~$B$ is a function
\hbox{$R:A\times B\to\N$}. It can be equivalently be seen as a multiset whose
elements are in $A\times B$ or as a matrix over $\N$, or as a span
\[
\xymatrix@C=2ex@R=2ex{
  &\ar[dl]_sR\ar[dr]^t&\\
  A&&B\\
}
\]
in the category $\FinSet$ of finite sets -- for the latest case, the multiset
representation can be recovered from the span by
\[
R(a,b)\qeq\left|\setof{e\in R\tq s(e)=a\tand t(e)=b}\right|
\]
for every element $(a,b)\in A\times B$.
If \hbox{$R_1:A\to B$} and \hbox{$R_2:B\to C$} are two multirelations, their
composition is defined by
\[
R_2\circ R_1(a,c)
\qeq
\sum_{b\in B}R_1(a,b)\times R_2(b,c)
\tdot
\]
This corresponds to the usual composition of matrices if we see $R_1$ and $R_2$
as matrices over $\N$, and as the span obtained by computing the pullback
\[
\xymatrix@C=2ex@R=2ex{
  &&\ar[dl]R_2\circ R_1\ar[dr]&&\\
  &\ar[dl]_{s_1}R_1\ar[dr]^{t_1}&&\ar[dl]_{s_2}R_2\ar[dr]^{t_2}&\\
  A&&B&&C\\
}
\]
if we see $R_1$ and $R_2$ as spans in $\Set$. The cardinal $\card{R}$ of a
multirelation \hbox{$R:A\to B$} is the sum
\[
\card{R}\qeq\sum_{(a,b)\in A\times B}R(a,b)
\]
of its coefficients. We write $\FMR$ for the monoidal theory of multirelations:
its objects are finite ordinals and morphisms are multirelations between
them. It is a strict symmetric monoidal category with the tensor product
$\otimes$ defined on objects and morphisms by disjoint union, and thus a
monoidal theory.
In this category, the object~$\intset{1}$ can be equipped with the obvious
structure of bicommutative bialgebra
\begin{equation}
  \label{eq:mrel-bialg}
  (1,R^\mu,R^\eta,R^\delta,R^\varepsilon)
\end{equation}
In this structure, $R^\mu:\intset{2}\to\intset{1}$ is the multirelation defined
by $R^\mu(i,0)=1$ for $i=0$ or $i=1$, $R^\delta:\intset{1}\to\intset{2}$ is the
multirelation dual to~$R^\mu$, and $R^\eta:0\to 1$ and \hbox{$R^\varepsilon:1\to
  0$} are uniquely defined by the fact that the object $0$ is both initial and
terminal in~$\FMR$. We now show that the category of multirelations is presented
by the equational theory~$\eqth{B}$ of bicommutative bialgebras. We
write~$\mathcal{B}$ for the syntactic category of~$\eqth{B}$ (\ie the monoidal
category generated by the underlying signature of~$\eqth{B}$), so
that~$\catquot{\mathcal{B}}$ is the monoidal category generated by~$\eqth{B}$,
where~$\equiv$ is the congruence generated by the relations of~$\eqth{B}$. The
bicommutative bialgebra structure~\eqref{eq:mrel-bialg} induces an
``interpretation functor'' \hbox{$I:\mathcal{B}\to\FMR$} such that~$I(1)=1$,
$I(\mu)=R^\mu$, $I(\eta)=R^\eta$, $I(\delta)=R^\delta$ and
$I(\varepsilon)=R^\varepsilon$. Since, the morphisms~\eqref{eq:mrel-bialg}
satisfy the equations of bicommutative bialgebra, the interpretations of two
morphisms of~$\mathcal{B}$ related by~$\equiv$ will be equal. The interpretation
functor thus extends to a functor $\catquot{I}\ :\catquot{\mathcal{B}}\
\to\FMR$.

\begin{example}
  Consider the morphism
  \[
  ((\mu\otimes\eta\otimes 1)\circ(1\otimes\delta)\circ(\delta\otimes\varepsilon))\otimes 1
  \qcolon
  3\to 4
  \]
  of $\mathcal{B}$ whose graphical representation is
  \begin{equation}
    \label{ex:intp-sd}
    \strid{intp}
  \end{equation}
  Its interpretation is the multirelation
  \begin{equation}
    \label{ex:intp-mr}
    ((R^\mu\otimes R^\eta\otimes 1)\circ(1\otimes R^\delta)\circ(R^\delta\otimes
    R^\varepsilon))\otimes 1
  \end{equation}
  This multirelation is a function~$3\times 4\to\N$ (where $3$ and $4$ are
  respectively the sets~$\{0,1,2\}$ and~$\{0,1,2,3\}$) and can thus be
  represented as the following $\N$-valued matrix of size~\hbox{$3\times 4$}:
  \[
  \left(
    \begin{matrix}
      2&0&1&0\\
      0&0&0&0\\
      0&0&0&1\\
    \end{matrix}
  \right)
  \]
  This matrix is computed by evaluating the formula~\eqref{ex:intp-mr} but has
  in fact a very natural interpretation if we consider the string diagrammatic
  representation~\eqref{ex:intp-sd} of the morphism: an entry~$(i,j)$ of the
  matrix is precisely the number of different paths in wires linking the
  object~$i$ on the input to the object~$j$ on the output (for example, from~$0$
  there are two paths to~$0$ and one to~$2$, thus the first line of the matrix).
\end{example}


For every morphism $\phi:m+1\to n$ in $\mathcal{B}$, where $m>0$, we define a
morphism written \hbox{$S^{m\to n}\phi:m+1\to n$} by
\begin{equation}
  \label{eq:ctx-S}
  S^{m\to n}\phi\qeq \phi\circ(\gamma\otimes \id_{m-1})
\end{equation}
Graphically,
\[
S^{m\to n}\phi\qeq\strid{gsym_s}
\]
The \emph{stairs} morphisms are defined inductively as either $\id_1$ or
$S\phi'$ where $\phi'$ is a stair, and are represented graphically as
\[
\sstrid{gsym}
\]
The \emph{length} of a stairs is defined as $0$ if it is an identity~$\id_1$, or
as the length of the stairs~$\phi'$ plus one if it is of the form $S\phi'$. The
stairs of length~$n+1$ is written~$\gamma_n:n\to n$.

Morphisms~$\phi$ which are \emph{precanonical forms} are defined inductively:
$\phi$ is either empty or
\[
\begin{array}{ccccc}
  H^{m\to n}\phi'=\sstrid{bialg_nf_eta}
  &
  \tor
  &
  E^{m\to n}\phi'=\sstrid{bialg_nf_eps}
\end{array}
\]
or
\[
W_i^{m\to n}\phi'
\qeq
\sstrid{bialg_nf_mu}
\]
where $\phi:m\to n$ is a precanonical form. In this case, we write respectively
$\phi$ as~$Z:0\to 0$ (the identity morphism~$\id_{\intset{0}}$), as~$H^{m\to
  n}\phi':m\to n+1$, as \hbox{$E^{m\to n}\phi':m+1\to n$} or as $W_i^{m\to
  n}\phi':m\to n$ (where $i$ is the length of the stairs in the
morphism). Algebraically,
\[
Z=\id_0
\qquad
E^{m\to n}\phi'=\varepsilon\otimes\phi'
\qquad
H^{m\to n}\phi'=\eta\otimes\phi'
\]
and
\[
W_i^{m\to n}\phi'
\qeq
(i\otimes\mu\otimes(n-1-i))\circ(\gamma_i\otimes(n-i))\circ(1\otimes\phi')\circ(\delta\otimes(m-1))
\]
Precanonical forms~$\phi$ are thus the well formed morphisms (where compositions
respect types) generated by the following grammar:
\begin{equation}
  \label{eq:precan-mrel-gram}
  \phi\qgramdef Z\gramor H^{m\to n}\phi\gramor E^{m\to n}\phi\gramor W_i^{m\to n}\phi
\end{equation}
In order to simplify the notation, we will remove the superscripts in the
following and simply write~$W_i\phi$ instead of~$W_i^{m\to n}\phi$.

It is easy to remark that every non-identity morphism~$\phi$ of a category
generated by a monoidal equational theory (such as~$\eqth{B}$) can be written as
$\phi=(\intset{m}\otimes\pi\otimes\intset{n})\circ\phi'$, where~$\pi$ is a
generator, thus allowing us to reason inductively about morphisms, by case
analysis on the integer $\intset{m}$ and on the generator~$\pi$. Using this
technique, we can prove that
\begin{lemma}
  \label{lemma:mrel-precan}
  Every morphism~$\phi:m\to n$ of~$\mathcal{B}$ is equivalent (\wrt{} the relation
  $\equiv$) to a precanonical form.
\end{lemma}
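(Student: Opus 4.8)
Following the methodology of Burroni and Lafont recalled above, the plan is to argue by induction on the size $\size\phi$. If $\size\phi=0$ then $\phi$ is an identity, and a direct verification shows that every identity is $\equiv$-equivalent to a precanonical form (already for $\id_1$ this uses one of the counit laws and one of the unit laws of \eqref{eq:monoid}, larger identities being handled in the same spirit); this settles the base case. If $\size\phi>0$, we use the decomposition $\phi=(\intset m\otimes\pi\otimes\intset n)\circ\phi'$ in which $\pi$ is one of the five generators $\mu,\eta,\delta,\varepsilon,\gamma$ and $\size{\phi'}<\size\phi$; by the induction hypothesis $\phi'\equiv\psi$ for some precanonical form $\psi$, so everything reduces to proving that $(\intset m\otimes\pi\otimes\intset n)\circ\psi$ is again $\equiv$-equivalent to a precanonical form.

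This last statement is proved by a nested structural induction on $\psi$, with a case analysis on the generator $\pi$ and on the position $m$. The idea is to ``push'' $\pi$, which acts on the outputs of $\psi$, downwards through the outermost constructor ($H$, $E$ or $W_i$) of $\psi$, until it is either absorbed by a matching generator through one of the bialgebra relations, or reaches the empty form $Z$. The cases $\pi\neq\mu$ are routine bookkeeping: postcomposing by $\eta$ adds a fresh output which, after sliding it to the leftmost position using the naturality of $\gamma$ with respect to $\eta$ from \eqref{eq:monoid-nat}, is recorded by an extra $H$; postcomposing by $\varepsilon$ on an output coming from an $\eta$ (an $H$) is cancelled by $\varepsilon\circ\eta=\id_0$ of \eqref{eq:bialg}, on an output coming from a $\mu$ (a $W_i$) it splits via $\varepsilon\circ\mu=\varepsilon\otimes\varepsilon$ of \eqref{eq:bialg} and one recurses, and the remaining subcases are similar; postcomposing by $\gamma$ swaps two neighbouring outputs and is handled by commuting it past the outermost constructor using the naturality equations \eqref{eq:monoid-nat} (and their duals), the Yang--Baxter equation and $\gamma\circ\gamma=\id$ of \eqref{eq:sym} (the last equation also serving to normalise the stairs $\gamma_i$ occurring inside the $W_i$'s); and the case $\pi=\delta$ is dual to the case $\pi=\mu$, since the grammar of precanonical forms is stable under the duality turning diagrams upside-down and exchanging $\mu$ with $\delta$ and $\eta$ with $\varepsilon$, so it suffices to treat $\mu$.

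The crux is thus the case $\pi=\mu$: postcomposing $\psi$ by $\mu$ merges two adjacent outputs, and this $\mu$ has to be commuted down through all of the constructors of $\psi$. When it reaches the $\eta$ of an $H$ it disappears by the unit law \eqref{eq:monoid}; when it reaches the $\mu$-part of a $W_i$, associativity and commutativity of $\mu$ together with the naturality equations let one rearrange the multiplications and the intervening stairs; and the genuinely delicate subcase is when the two merged outputs are fed, through copies introduced by the $W_i$'s, from a common input, in which case the bialgebra relation of \eqref{eq:bialg} identifying $(\mu\otimes\mu)\circ(1\otimes\gamma\otimes 1)\circ(\delta\otimes\delta)$ with $\delta\circ\mu$ is exactly what restores the $\delta$-above-$\mu$ shape of a $W_i$. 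Keeping track of positions and of the stairs throughout these rewrites is what makes the argument long and repetitive, but each individual step is a single application of one of the twenty-two defining relations of $\eqth{B}$.
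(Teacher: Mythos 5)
Your proposal follows essentially the same route as the paper: induction on $\size{\phi}$, with identities handled via the unit and counit laws in the base case, and in the inductive step a decomposition $\phi=(\intset{m_1}\otimes\pi\otimes\intset{m_2})\circ\psi$ followed by a nested case analysis pushing the single generator $\pi$ through the outermost constructor $Z$, $H$, $E$ or $W_i$ of the precanonical form of $\psi$, using the bialgebra relations (unit absorption for $\eta$ against $\mu$, the $\delta$--$\mu$ compatibility law when two outputs sharing a copied input are merged, naturality of $\gamma$ and Yang--Baxter for the stairs). This matches the paper's proof in structure and in all the key mechanisms.

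One step, however, would fail as stated: the claim that the case $\pi=\delta$ reduces to the case $\pi=\mu$ ``since the grammar of precanonical forms is stable under the duality turning diagrams upside-down.'' It is not. Under that duality $H$ and $E$ do exchange, but $W_i$ does not dualize to a $W_j$: the constructor $W_i$ connects the \emph{first input} (via $\delta$ and a stairs) to the \emph{$i$-th output} (via $\mu$), and its mirror image connects the first output to the $i$-th input, which is not a production of the grammar~\eqref{eq:precan-mrel-gram}. Accordingly, the paper treats $\pi=\delta$ as a genuinely separate case, and its subcases are not mirror images of those for $\mu$; for instance, when $\delta$ hits the $W_i$ constructor at position $m_1=i$, the bialgebra law produces $W_iW_{i+1}(\xi\circ\psi')$ --- a doubling of $W$'s that never occurs in the $\mu$ case. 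The fix is routine (an explicit case analysis for $\delta$ of the same flavour as for $\mu$), so this is a repairable flaw in the write-up rather than a wrong approach, but as written the duality shortcut does not go through.
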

\begin{proof}
  By induction on the size $\size{\phi}$ of~$\phi$.
  \begin{itemize}
  \item If~$\size{\phi}=0$ then $m=n$ and $\phi=\id_n$. If~$n=0$
    then~$\phi=Z$. Otherwise, we have $\phi=\id_{n+1}=1\otimes\id_n=W_0EH\id_n$
    and~$\id_n$ is equivalent to a canonical form by induction on~$n$.
  \item Otherwise, the morphism~$\phi$ is of the form~$\phi=\xi\circ\psi$
    with~$\size{\xi}=1$ and \hbox{$\size{\xi}+\size{\psi}=\size{\phi}$}. By
    induction hypothesis, the morphism~$\psi$ is equivalent to a canonical
    form. Moreover, the morphism~$\xi$ is of the form~$m_1\otimes\pi\otimes m_2$
    where~$\pi$ is either~$\mu$, $\eta$, $\delta$, $\varepsilon$ or~$\gamma$. We
    show the result by distinguishing these five cases for~$\pi$ and for each
    case by distinguishing whether the precanonical form of~$\psi$ is of the
    form~$Z$, $H\psi'$, $E\psi'$ or~$W_i\psi'$.
    \begin{enumerate}
    \item Suppose that~$\pi=\mu$.
      \begin{enumerate}
      \item If~$\psi=H\psi'$ then we distinguish two cases.
        \begin{itemize}
        \item If~$m_1=0$ then we have the equivalence
          \[
          \strid{bialg_mu_h_1_1}
          \qequiv
          \strid{bialg_mu_h_1_2}
          \]
          where~$\psi'$ is equivalent to a precanonical form by induction
          hypothesis.
        \item Otherwise, the morphism~$\phi$ can be represented by
          \[
          \strid{bialg_mu_h_2_1}
          \]
          and is of the form~$H(((m_1-1)\otimes\mu\otimes m_2)\circ\psi')$,
          where the morphism \hbox{$((m_1-1)\otimes\mu\otimes m_2)\circ\psi'$}
          is equivalent to a precanonical form by induction hypothesis.
        \end{itemize}
      \item If~$\psi=E\psi'$ then the morphism~$\phi$ can be represented by
        \[
        \strid{bialg_mu_e}
        \]
        and is of the form~$E(\xi\circ\psi')$ where the morphism~$\xi\circ\psi'$
        is equivalent to a precanonical form by induction hypothesis.
      \item If~$\psi=W_i'\psi'$ then we distinguish four cases
        \begin{itemize}
        \item If~$m_1<i-1$ then we have the equivalence
          \[
          \strid{bialg_mu_w_1_1}
          \qequiv
          \strid{bialg_mu_w_1_2}
          \]
          and~$\phi$ is of the form~$W_{i-1}(((m_1-1\otimes\mu\otimes
          m_2))\circ\psi')$ where the morphism $((m_1-1\otimes\mu\otimes
          m_2))\circ\psi'$ is equivalent to a precanonical form by induction
          hypothesis.
        \item If~$m_1=i-1$ then we have the equivalences
          \[
          \hspace{-2cm}
          \strid{bialg_mu_w_2_1}
          \equiv
          \strid{bialg_mu_w_2_2}
          \equiv
          \strid{bialg_mu_w_2_3}
          \]
          and we actually are in the case which is handled just below.
        \item If~$m_1=i$ then we have the equivalence
          \[
          \strid{bialg_mu_w_3_1}
          \qequiv
          \strid{bialg_mu_w_3_2}
          \]
          and~$\phi$ is of the form~$W_i(\xi\circ\psi')$ where the morphism
          $\xi\circ\psi'$ is equivalent to a precanonical form by induction
          hypothesis.
        \item If~$m_1>i$ then~$\phi$ can be represented by
          \[
          \strid{bialg_mu_w_4_1}
          \]
          and is of the form~$W_i(\xi\circ\psi')$ where the
          morphism~$\xi\circ\psi'$ is equivalent to a precanonical form by
          induction hypothesis.
        \end{itemize}
      \end{enumerate}
    \item Suppose that~$\pi=\eta$.
      \begin{enumerate}
      \item If~$\psi=Z$ then~$\phi=HZ$ which is a precanonical form.
      \item If~$\psi=H\psi'$ then we distinguish two cases.
        \begin{itemize}
        \item If~$m_1=0$ then~$\phi=HH\psi'$ which is a precanonical form.
        \item Otherwise, $\phi=H(((m_1-1)\otimes\eta\otimes m_2)\circ\psi')$
          where \hbox{$(m_1-1)\otimes\eta\otimes m_2$} is equivalent to a
          precanonical form by induction hypothesis.
        \end{itemize}
      \item If~$\psi=E\psi'$ then~$\phi=E(\xi\circ\psi')$ where the
        morphism~$\xi\circ\psi'$ is equivalent to a precanonical form by
        induction hypothesis.
      \item If~$\psi=W_i\psi'$ then we distinguish two cases.
        \begin{itemize}
        \item If~$m_1\leq i$ then~$\phi\equiv W_{i+1}(\xi\circ\psi')$ where the
          morphism~$\xi\circ\psi'$ is equivalent to a precanonical form by
          induction hypothesis.
        \item Otherwise, $\phi=W_i(\xi\circ\psi')$ where the
          morphism~$\xi\circ\psi'$ is equivalent to a precanonical form by
          induction hypothesis.
        \end{itemize}
      \end{enumerate}
    \item Suppose that $\pi=\delta$.
      \begin{enumerate}
      \item If $\psi=H\psi'$ then we distinguish two cases.
        \begin{itemize}
        \item If $m_1=0$ then $\phi\equiv HH\psi'$ where $\psi'$ is a
          precanonical form.
        \item Otherwise, $\phi\equiv H(((m_1-1)\otimes\delta\otimes
          m_2)\circ\psi')$ where $((m_1-1)\otimes\delta\otimes m_2)\circ\psi'$
          is equivalent to a precanonical form by induction hypothesis.
        \end{itemize}
      \item If $\psi=E\psi'$ then $\phi=E(\xi\circ\psi')$ where the morphism
        $\xi\circ\psi'$ is equivalent to a precanonical form by induction
        hypothesis
      \item If $\psi=W_i\psi'$ the we distinguish three cases.
        \begin{itemize}
        \item If $m_1<i$ then $\phi\equiv W_{i+1}(\xi\circ\psi')$ where the
          morphism $\xi\circ\psi'$ is equivalent to a precanonical form by
          induction hypothesis
        \item If $m_1=i$ then $\phi\equiv W_iW_{i+1}(\xi\circ\psi')$ where the
          morphism $\xi\circ\psi'$ is equivalent to a precanonical form by
          induction hypothesis.
        \item Otherwise, $\phi=W_i(\xi\circ\psi')$ where the morphism
          $\xi\circ\psi'$ is equivalent to a precanonical form by induction
          hypothesis.
        \end{itemize}
      \end{enumerate}
    \item Suppose that $\pi=\varepsilon$.
      \begin{enumerate}
      \item If $\psi=H\psi'$ then we distinguish two cases.
        \begin{itemize}
        \item If $m_1=0$ then $\phi\equiv\psi'$ where the morphism $\psi'$ is a
          precanonical form.
        \item Otherwise, $\psi=H(((m_1-1)\otimes\varepsilon\otimes
          m_2)\circ\psi')$ where $((m_1-1)\otimes\varepsilon\otimes
          m_2)\circ\psi'$ is equivalent to a precanonical form by induction
          hypothesis.
        \end{itemize}
      \item If $\psi=E\psi'$ then $\phi=E(\xi\circ\psi')$ where the morphism
        $\xi\circ\psi'$ is equivalent to a precanonical form by induction
        hypothesis.
      \item If $\psi=W_i\psi'$ then we distinguish three cases.
        \begin{itemize}
        \item If $m_1<i$ then $\phi\equiv W_{i-1}(\xi\circ\psi')$ where the
          morphism $\xi\circ\psi'$ is equivalent to a precanonical form by
          induction hypothesis.
        \item If $m_1=i$ then $\phi\equiv E(\xi\circ\psi')$ where the morphism
          $\xi\circ\psi'$ is equivalent to a precanonical form by induction
          hypothesis.
        \item Otherwise, $\phi=W_i(\xi\circ\psi')$ where the morphism
          $\xi\circ\psi'$ is equivalent to a precanonical form by induction
          hypothesis.
        \end{itemize}
      \end{enumerate}
    \item Suppose that $\pi=\gamma$.
      \begin{enumerate}
      \item If $\psi=H\psi'$ then we distinguish two cases.
        \begin{itemize}
        \item If $m_1=0$ then $\phi\equiv((1\otimes\eta\otimes m_2)\circ\psi')$
          where the morphism $(1\otimes\eta\otimes m_2)\circ\psi'$ is equivalent
          to a precanonical form by induction hypothesis.
        \item Otherwise, $\phi=H(\xi\circ\psi')$ where the morphism
          $\xi\circ\psi'$ is equivalent to a precanonical form by induction
          hypothesis.
        \end{itemize}
      \item If $\psi=E\psi'$ then $\phi=E(\xi\circ\psi')$ where the morphism
        $\xi\circ\psi'$ is equivalent to a precanonical form by induction
        hypothesis.
      \item If $\psi=W_i\psi'$ then we distinguish four cases.
        \begin{itemize}
        \item If $m_1<i-1$ then $\phi\equiv W_i(\xi\circ\psi')$ where the
          morphism $\xi\circ\psi'$ is equivalent to a precanonical form by
          induction hypothesis.
        \item If $m_1=i-1$ then $\phi\equiv W_{i-1}(((m_1+1)\otimes\gamma\otimes
          m_2)\circ\psi')$ where the morphism $((m_1+1)\otimes\gamma\otimes
          m_2)\circ\psi'$ is equivalent to a precanonical form by induction
          hypothesis.
        \item If $m_1=i$ then $\phi\equiv W_{i+1}(((m_1+1)\otimes\gamma\otimes
          m_2)\circ\psi')$ where the morphism $((m_1+1)\otimes\gamma\otimes
          m_2)\circ\psi'$ is equivalent to a precanonical form by induction
          hypothesis.
        \item Otherwise, $\phi=W_i(\xi\circ\psi')$ where the morphism
          $\xi\circ\psi'$ is equivalent to a precanonical form by induction
          hypothesis.\qedhere
        \end{itemize}
      \end{enumerate}
    \end{enumerate}
  \end{itemize}
\end{proof}

The \emph{canonical forms} are precanonical forms which are normal \wrt{} the
following rewriting system:
\begin{equation}
  \label{eq:mrel-cf-rs}
  \begin{array}{r@{\quad\Longrightarrow\quad}l}
    HW_i&W_{i+1}H\\
    HE&EH\\
    W_iW_j&W_jW_i\qquad\text{when $i<j$}
  \end{array}
\end{equation}
when considered as words generated by the grammar~\eqref{eq:precan-mrel-gram}.
It is routine verifications to show that two precanonical forms~$\phi$
and~$\psi$ such that~$\phi$ rewrites to~$\psi$ are equivalent. This rewriting
system thus provides us with a notion of canonical form for precanonical forms:

\begin{lemma}
  \label{lemma:mrel-can}
  The rewriting system~\eqref{eq:mrel-cf-rs} is normalizing.
\end{lemma}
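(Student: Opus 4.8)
The plan is to establish the stronger property that the system \emph{terminates} (no infinite reduction sequence), from which normalization is immediate. First I would regard a precanonical form as a word over the alphabet $\{H,E\}\cup\{W_i\mid i\in\N\}$ — the trailing $Z$ is never involved in a rule and may be ignored — with the three rules of~\eqref{eq:mrel-cf-rs} acting by replacement of a factor inside such a word, as in the grammar~\eqref{eq:precan-mrel-gram}. (That the result of such a replacement is again a well-formed precanonical form is a routine bookkeeping check that I would dispose of at the outset.) Then I would equip words with a measure valued in a well-founded order and check it strictly decreases at every step.

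Concretely, to a word $w$ associate two natural numbers: $\mathrm{inv}_H(w)$, the number of pairs of positions $p<q$ in $w$ whose letters are respectively $H$ and \emph{not} $H$ (\ie $E$ or some $W_i$); and $\mathrm{inv}_W(w)$, the number of pairs $p<q$ carrying letters $W_i$ and $W_j$ with $i<j$. Order $\N\times\N$ lexicographically, which is well-founded, and take the measure of $w$ to be $(\mathrm{inv}_H(w),\mathrm{inv}_W(w))$. For the rules $HW_i\Longrightarrow W_{i+1}H$ and $HE\Longrightarrow EH$, the matched factor carries exactly one $H$ and one non-$H$ letter both before and after the rewrite, so the only $H$-before-non-$H$ pair that is affected is the one internal to the factor, which is destroyed; hence $\mathrm{inv}_H$ decreases by exactly $1$ and the lexicographic measure drops. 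For the rule $W_iW_j\Longrightarrow W_jW_i$ with $i<j$, the matched factor carries no $H$ and the same two $W$-letters before and after, so $\mathrm{inv}_H$ is unchanged, while among the $W$-pairs only the internal one changes status — from ascending ($i<j$) to non-ascending — so $\mathrm{inv}_W$ decreases by exactly $1$, and again the measure drops.

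The step requiring the most care is exactly the design of the measure: a single count will not work, because $HW_i\Longrightarrow W_{i+1}H$ can \emph{increase} $\mathrm{inv}_W$ (for instance $W_iHW_i\Longrightarrow W_iW_{i+1}H$ turns a non-ascending $W$-pair into an ascending one), which is why $\mathrm{inv}_H$ must be given lexicographic priority. Relatedly, one must check that the change in each count caused by a rewrite depends only on the rule applied and not on the surrounding context; this is where it matters that the two sides of every rule carry the same number of $H$'s and, for the third rule, the same multiset of $W$-indices, so that pairs straddling the rewritten factor contribute equally before and after. Granting this, no reduction can be infinite, since its measures would form an infinite strictly descending chain in the well-ordered set $\N\times\N$; hence the system terminates, and in particular is normalizing.
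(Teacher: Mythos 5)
Your termination argument is sound, and it takes a genuinely different (though structurally similar) route from the paper's. You use a lexicographic measure on $\N\times\N$ built from two inversion counts, and the case analysis is right: for the first two rules the matched factor contains one $H$ and one non-$H$ letter on both sides, so straddling pairs contribute identically before and after and only the internal $H$-inversion is destroyed; for the third rule the multiset of $W$-indices in the factor is preserved, so $\mathrm{inv}_H$ and all straddling $W$-pairs are unchanged while one ascending $W$-pair is destroyed. You also correctly identify the one subtlety, namely that $HW_i\Longrightarrow W_{i+1}H$ can increase $\mathrm{inv}_W$, which forces the lexicographic priority. The paper instead defines a weighted interpretation $\intp{-}$ into $\N\times\N$ with $\intp{H}=(0,0)$, $\intp{E}=(1,0)$, $\intp{W_i}=(1,i)$ and $\intp{G\circ F}=(\intp{G}_1+2\intp{F}_1,\intp{G}_2+2\intp{F}_2)$; both are lexicographic $\N^2$-valued termination orders, and your inversion count is arguably the more transparent of the two.

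There is, however, a substantive omission: the paper's proof of this lemma does not stop at termination. It goes on to check that the two critical pairs (the overlaps $W_iW_jW_k$ with $i<j<k$ and $HW_iW_j$ with $i<j$) are joinable, hence the system is locally confluent, hence confluent by Newman's lemma. Termination alone gives weak normalization, which matches the literal wording of the statement, but the lemma is invoked immediately afterwards to conclude that every morphism of $\mathcal{B}$ is equivalent to a \emph{unique} canonical form, and that uniqueness rests on confluence, not on the mere existence of normal forms. So your proof as written establishes the statement read literally but not the property the development actually uses; you should add the (short) local-confluence check for the two overlaps to close the gap.
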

\begin{proof}
  We first show that the rewriting system is terminating by defining an
  interpretation of precanonical forms into~$\N\times\N$, ordered
  lexicographically. This interpretation~$\intp{-}$ is defined on generators by
  \[
  \intp{Z}=(0,0)
  \qquad
  \intp{H}=(0,0)
  \qquad
  \intp{E}=(1,0)
  \qquad
  \intp{W_i}=(1,i)
  \]
  and on composition and identities by
  \[
  \intp{G\circ F}=(\intp{G}_1+2\times\intp{F_1},\intp{G}_2+2\times\intp{F}_2)
  \qtand
  \intp{\id}=(0,0)
  \]
  where~$F$ and~$G$ are such that~$\intp{F}=(\intp{F}_1,\intp{F}_2)$
  and~$\intp{G}=(\intp{G}_1,\intp{G}_2)$. It can be remarked that the rules are
  strictly decreasing \wrt this interpretation:
  \[
  \intp{HW_i}=(2,2i)>(1,i)=\intp{W_iH}
  \qquad
  \intp{HE}=(2,0)>(1,0)=\intp{EH}
  \]
  and
  \[
  \intp{W_iW_j}=(3,i+2j)>(3,j+2i)=\intp{W_jW_i}
  \]
  The rewriting system is therefore terminating. It moreover locally confluent,
  since the two critical pairs are joinable:
  \[
  \hspace{-6mm}
  \begin{array}{cc}
    \vxym{
      &\ar[dl]W_iW_jW_k\ar[dr]&\\
      W_jW_iW_k\ar[d]&&W_iW_kW_j\ar[d]\\
      W_jW_kW_i\ar[dr]&&W_kW_iW_k\ar[dl]\\
      &W_kW_jW_i&\\
    }
    &
    \vxym{
      &\ar[dl]HW_iW_j\ar[dr]&\\
      W_{i+1}HW_j\ar[d]&&HW_jW_i\ar[d]\\
      W_{i+1}W_{j+1}H\ar[dr]&&W_{j+1}HW_i\ar[dl]\\
      &W_{j+1}W_{i+1}H&\\
    }
    \\
    \text{with $i<j<k$}
    &
    \text{with $i<j$}
  \end{array}
  \]
  The rewriting system being terminating, it is thus confluent.
\end{proof}
\begin{remark}
  Canonical forms are the precanonical forms of the form
  \begin{equation}
    \label{eq:mrel-can}
    W_{i^n_{k_n}}\cdots W_{i^n_1}E\cdots\cdots W_{i^1_{k_1}}\cdots W_{i^1_1}EH\cdots HZ
  \end{equation}
  with $i^p_1\geq\ldots\geq i^p_{k_p}$, for every~$p$ such that $1\leq p\leq n$.
\end{remark}

From Lemmas~\ref{lemma:mrel-precan} and~\ref{lemma:mrel-can}, we can finally
deduce that every morphism of the category~$\mathcal{B}$ is equivalent to an
unique canonical form.

\begin{lemma}
  \label{lemma:mrel-bij}
  The interpretation functor~$\catquot{I}\ :\catquot{\mathcal{B}}\ \to\FMR$ is full.
\end{lemma}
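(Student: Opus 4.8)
The plan is to establish that, for all objects $m$ and $n$, the function $\catquot{\mathcal{B}}(m,n)\to\FMR(m,n)$ induced by $\catquot{I}$ is surjective; concretely, that every multirelation $R\colon m\to n$ arises as $I(\phi)$ for some morphism $\phi$ of $\mathcal{B}$. By Lemmas~\ref{lemma:mrel-precan} and~\ref{lemma:mrel-can} every morphism of $\mathcal{B}$ is $\equiv$-equivalent to a canonical form, so it would even suffice (and will be useful later, towards faithfulness) to produce for each $R$ a \emph{canonical} form mapping to it; but for fullness any representative does.

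First I would identify $R$ with its $\N$-valued $m\times n$ matrix $(R_{ij})$ and write it as the sum $R=\sum_{i,j}R_{ij}\,E_{ij}$ of scalar multiples of the elementary multirelations $E_{ij}\colon m\to n$ carrying a single edge from input $i$ to output $j$. Each summand lies in the image of $I$: writing $\delta_k\colon 1\to k$ and $\mu_k\colon k\to 1$ for the iterated comultiplication and multiplication assembled from $\eta,\delta$ and $\varepsilon,\mu$ (so $\delta_0=\varepsilon$, $\delta_1=\id_1$, $\delta_2=\delta$, and dually for $\mu$), the morphism
\[
\bigl(\eta^{\otimes j}\otimes\id_1\otimes\eta^{\otimes(n-1-j)}\bigr)\circ(\mu_c\circ\delta_c)\circ\bigl(\varepsilon^{\otimes i}\otimes\id_1\otimes\varepsilon^{\otimes(m-1-i)}\bigr)\ \colon\ m\to n
\]
is sent by $I$ to $c\,E_{ij}$, using that $I$ is monoidal with $I(\mu)=R^\mu$, $I(\eta)=R^\eta$, $I(\delta)=R^\delta$, $I(\varepsilon)=R^\varepsilon$ (for $c=0$ the summand is simply absent). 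Finally, the componentwise sum of two multirelations $R_1,R_2\colon m\to n$ is obtained by precomposing $R_1\otimes R_2\colon m+m\to n+n$ with a ``duplicate-and-separate'' comultiplication $m\to m+m$ built from copies of $\delta$ and symmetries $\gamma$, and postcomposing with the dual ``merge'' multiplication $n+n\to n$; since $I$ is monoidal these are interpreted as the corresponding multirelations and the interpretation of the resulting morphism of $\mathcal{B}$ is $R_1+R_2$. Combining these over the finitely many nonzero entries of $R$ yields a morphism $\phi\colon m\to n$ in $\mathcal{B}$ with $I(\phi)=R$; by Lemmas~\ref{lemma:mrel-precan} and~\ref{lemma:mrel-can} we may even take $\phi$ to be a canonical form. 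Hence $\catquot{I}$ is full.

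There is no deep obstacle once fullness has been reduced to surjectivity of $I$ on multirelations; the only point demanding some care is the ``duplicate/merge'' step, where one must thread the symmetries $\gamma$ correctly and check, using that composition in $\FMR$ is matrix multiplication (equivalently, path counting in string diagrams as in the Example) and that $\gamma$ is interpreted as a transposition, that the assembled morphism really has matrix $R_1+R_2$ --- this is exactly where the compatibility axioms between $\mu,\delta,\gamma$ in $\eqth{B}$ are used, and it is routine. As an alternative more in the style of the proof of Lemma~\ref{lemma:mrel-precan}, one can argue by induction on the pair $(m,\card{R})$ ordered lexicographically: if $m=0$ then $R$ is the unique multirelation $0\to n$, realized by $\eta^{\otimes n}$; if $m>0$ and the first row of $R$ vanishes, peel off input $0$ with an $\varepsilon$ and recurse on a multirelation $(m-1)\to n$; and if the first row is nonzero, split off a single edge from input $0$ via a $\delta$ and a $\mu$ and recurse on a multirelation of the same arity but strictly smaller cardinal.
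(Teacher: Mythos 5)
Your argument is correct, but your primary route is genuinely different from the paper's. The paper proves fullness by a direct induction on the pair $(m,\card{R})$: if $m=0$ take $H\cdots HZ$; if the first row of $R$ vanishes peel off input $0$ with an $E$; otherwise subtract one unit from the entry $R(0,k)$ of largest column index $k$ in the first row and prepend $W_k$. This is exactly the ``alternative'' you sketch in your last sentences, so you have in effect reproduced the paper's proof as a footnote to your own. Your main argument instead exploits the additive structure of $\FMR$: hom-sets are commutative monoids (indeed $\FMR$ is a category of $\N$-matrices), each elementary multirelation $c\,E_{ij}$ is realized by routing input $i$ through $\mu_c\circ\delta_c$ and into output $j$, and binary sums are realized by the duplicate/merge gadgets $\nabla_n\circ(-\otimes-)\circ\Delta_m$ built from $\delta$, $\mu$ and symmetries. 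This is sound --- note only that you implicitly use that $I(\gamma)$ is the transposition multirelation, which the paper leaves tacit in calling \eqref{eq:mrel-bialg} a bicommutative bialgebra structure --- and it is arguably more conceptual, since it makes visible why the bialgebra axioms suffice (they encode the biproduct structure of $\FMR$). What the paper's induction buys, and what your main argument does not directly give, is that the witness $\phi$ is produced \emph{already in precanonical form}, and in fact the induction doubles as an explicit enumeration showing that canonical forms are in bijection with multirelations; this bijection is what the paper uses immediately afterwards to get faithfulness and hence Theorem~\ref{thm:fmr-pres}. With your decomposition you must still invoke Lemmas~\ref{lemma:mrel-precan} and~\ref{lemma:mrel-can} to normalize, and you would need a separate counting argument for faithfulness, so for the purposes of this paper the inductive proof does more work per line. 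For fullness alone, both are complete.
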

\begin{proof}
  We show the result by showing that the functor~$I:\mathcal{B}\to\FMR$ is full,
  \ie that every multirelation~$R:\intset{m}\to\intset{n}$ is the image of a
  precanonical form~$\phi:m\to n$ in~$\mathcal{B}$, by induction on~$m$ and on
  the cardinal $\card{R}$ of $R$.
  \begin{enumerate}
  \item If $m=0$ then $R$ is the interpretation of the precanonical form
    $H\ldots HZ$, with~$n$ occurrences of $H$.
  \item If $m>0$ and for every $j<n$, $R(0,j)=0$ then $R$ is of the form
    $R=R^\varepsilon\otimes R'$, where~$R':\intset{m-1}\to\intset{n}$ is the
    multirelation such that~$R'(i,j)=R(i+1,j)$. By induction hypothesis, $R'$ is
    the interpretation of a precanonical form~$\phi'$ and~$R$ is therefore the
    interpretation of the precanonical form~$E\phi'$.
  \item Otherwise, we necessarily have~$n\neq 0$ and there exists and index~$k'$
    such that \hbox{$R(0,k)\neq 0$}. We write~$k$ for the greatest such
    index. The multirelation~$R$ is of the form
    \[
    R
    \qeq
    (\intset{k}\otimes R^\mu\otimes\intset{n-1-k})\circ(R^{\gamma_k}\otimes\intset{n-k})\circ(1\otimes R')\circ(R^\delta\otimes\intset{m-1})
    \]
    Where $R':\intset{m}\to\intset{n}$ is the multirelation defined by
    $R'(0,k)=R(0,k)-1$ and \hbox{$R'(i,j)=R(i,j)$} for
    every~$(i,j)\neq(0,k)$. The multirelation~$R'$ is thus of cardinal
    \hbox{$\card{R'}=\card{R}-1$} and is the interpretation of a precanonical
    form~$\phi':m\to n$ by induction hypothesis. Finally, $R$ is the
    interpretation of the precanonical form~$W_k\phi'$.\qedhere
  \end{enumerate}
\end{proof}

The proof of the previous lemma provides us with an algorithm which, given a
multirelation~$R$, builds a precanonical form~$\phi$ whose interpretation
is~$R$. The execution of this algorithm consists in enumerating the coefficients
of the multirelation column after column. We suppose given a
multirelation~$R:\intset{m}\to\intset{n}$. In pseudo-code, the algorithm can be
written as follows:

\vspace{2ex}

\noindent
for $i=0$ to $m-1$ do\\
\null\qquad for $j=n-1$ downto $0$ do\\
\null\qquad\qquad for $k=0$ to $R(i,j)$ do\\
\null\qquad\qquad\qquad print ``$W_j$''\\
\null\qquad\qquad done\\
\null\qquad\qquad print ``$H$''\\
\null\qquad done\\
done\\
for $j=0$ to $n-1$ do\\
\null\qquad print ``$E$''\\
done\\
print ``$Z$''

\vspace{2ex}

\noindent
The word printed by the algorithm will be a precanonical form whose
interpretation is~$R$.

Knowing the general form~\eqref{eq:mrel-can} of canonical forms, it is easy to
show that the precanonical form produced by the algorithm are actually canonical
forms. Conversely, every canonical form~\eqref{eq:mrel-can} can be read as an
``enumeration'' of the coefficients of a multirelation in a way similar the
previous algorithm. This shows that, in fact,
multirelations~$R:\intset{m}\to\intset{n}$ are in bijection with the canonical
forms~$\phi:m\to n$. A morphism of~$\mathcal{B}$ being equivalent to an unique
canonical form, we finally deduce that

\begin{theorem}
  \label{thm:fmr-pres}
  The categories~$\catquot{\mathcal{B}}$ and~$\FMR$ are isomorphic, \ie the
  category~$\FMR$ of natural numbers and multirelations is presented by the
  theory~$\eqth{B}$ of bicommutative bialgebras.
\end{theorem}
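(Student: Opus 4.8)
The plan is to glue the three preceding lemmas together into an isomorphism of categories. Since both $\catquot{\mathcal{B}}$ and $\FMR$ have the natural numbers as objects and the interpretation functor $\catquot{I}$ fixes the generating object $1$ while preserving $\otimes$, it is already a bijection on objects; it therefore suffices to prove that $\catquot{I}$ is full and faithful, and the resulting functor will automatically be a (strict) monoidal isomorphism.

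First I would record that every morphism of $\catquot{\mathcal{B}}$ admits a canonical-form representative. By Lemma~\ref{lemma:mrel-precan} every morphism of $\mathcal{B}$ is $\equiv$-equivalent to a precanonical form, and each rewrite rule of~\eqref{eq:mrel-cf-rs} relates two $\equiv$-equivalent precanonical forms (the routine diagrammatic check mentioned before Lemma~\ref{lemma:mrel-can}); since by Lemma~\ref{lemma:mrel-can} that rewriting system is normalizing, every morphism of $\mathcal{B}$ --- hence every morphism of $\catquot{\mathcal{B}}$ --- is $\equiv$-equivalent to a canonical form of the shape~\eqref{eq:mrel-can}.

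The core of the argument is then to show that $I$ restricts to a \emph{bijection} between the canonical forms $\phi : m \to n$ and the multirelations $R : \intset{m} \to \intset{n}$. Surjectivity is exactly Lemma~\ref{lemma:mrel-bij}, whose proof moreover exhibits the explicit algorithm producing, from the matrix of $R$ read column after column (each column from $j = n-1$ down to $0$), a word in the letters $W_i$, $H$, $E$ terminated by $Z$; one checks, using the explicit form~\eqref{eq:mrel-can} of canonical forms, that this word is always a canonical form. For injectivity I would observe that this passage is invertible: a canonical form~\eqref{eq:mrel-can} can be read back, block by block between consecutive $E$'s and $H$'s, as an enumeration of the coefficients $R(i,j)$ of a well-defined multirelation, and this readback is inverse to the algorithm; hence two distinct canonical forms have distinct interpretations. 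This column-by-column bookkeeping --- keeping track of which row index the stairs in the $W_i$'s refer to, and checking, as in case~3 of the proof of Lemma~\ref{lemma:mrel-bij}, that post-composing a precanonical form with $W_k$ increments exactly the coefficient $R(0,k)$ --- is the step I expect to be the most delicate, although it is entirely elementary and merely repetitive.

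Finally I would conclude: fullness of $\catquot{I}$ is Lemma~\ref{lemma:mrel-bij}; for faithfulness, given $f, g : m \to n$ in $\catquot{\mathcal{B}}$ with $\catquot{I}(f) = \catquot{I}(g)$, pick canonical forms $\phi \equiv f$ and $\psi \equiv g$, so that $I(\phi) = \catquot{I}(f) = \catquot{I}(g) = I(\psi)$ (as $I$ factors through $\equiv$), and the injectivity just established forces $\phi = \psi$, whence $f = g$ in $\catquot{\mathcal{B}}$. A functor that is bijective on objects, full and faithful is an isomorphism of categories, so $\catquot{\mathcal{B}} \cong \FMR$; since this isomorphism preserves $\otimes$ and the distinguished object $1$, it is in particular monoidal, which is precisely the statement that the theory $\eqth{B}$ presents $\FMR$.
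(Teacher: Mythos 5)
Your proposal is correct and follows essentially the same route as the paper: reduce every morphism to a canonical form via Lemmas~\ref{lemma:mrel-precan} and~\ref{lemma:mrel-can}, then use the column-by-column enumeration algorithm from the proof of Lemma~\ref{lemma:mrel-bij} and its readback inverse to put canonical forms in bijection with multirelations, yielding fullness and faithfulness of a functor that is already bijective on objects. If anything, you are slightly more explicit than the paper about where faithfulness comes from (injectivity of $I$ on canonical forms rather than a separately asserted uniqueness of canonical representatives), but the substance is identical.
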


\paragraph{Relations.}
The monoidal category $\Rel$ has finite ordinals as objects and relations as
morphisms. This category can be obtained from $\FMR$ by quotienting the
morphisms by the equivalence relation $\sim$ on multirelations such that two
multirelations $R_1,R_2:m\to n$ are equivalent when they have the same null
coefficients. We can therefore easily adapt the previous presentation to show
that

\begin{theorem}
  The category $\Rel$ of relations is presented by the equational theory
  $\eqth{R}$ of \emph{qualitative} bicommutative bialgebras.
\end{theorem}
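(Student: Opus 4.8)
The plan is to derive this from Theorem~\ref{thm:fmr-pres} by a quotient argument. Since $\eqth{R}$ has the same signature as $\eqth{B}$, and its relations are those of $\eqth{B}$ together with the single relation~\eqref{eq:qualitative}, the category generated by $\eqth{R}$ is the quotient of $\catquot{\mathcal{B}}$ by the congruence generated by~\eqref{eq:qualitative}. Transporting this quotient along the isomorphism $\catquot{I}\colon\catquot{\mathcal{B}}\to\FMR$ furnished by Theorem~\ref{thm:fmr-pres}, the category generated by $\eqth{R}$ becomes isomorphic to $\FMR/\!\approx$, where $\approx$ is the congruence on $\FMR$ generated by the single pair $R^\mu\circ R^\delta=\id_1$, the image of~\eqref{eq:qualitative}. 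Since $\FMR/\!\sim$ is by definition $\Rel$, it suffices to prove that the congruences $\approx$ and $\sim$ coincide. One inclusion is immediate: the multirelation $R^\mu\circ R^\delta\colon 1\to 1$ has its single coefficient equal to $2$, hence is $\sim$-related to $\id_1$, so the congruence $\sim$ contains the pair generating $\approx$, and therefore $\approx$ is contained in $\sim$.

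For the reverse inclusion, the one new ingredient is that~\eqref{eq:qualitative} collapses two consecutive $W_i$'s: $W_iW_i\phi'\approx W_i\phi'$ for every precanonical form $\phi'$. This follows from a short diagrammatic calculation --- unfolding the two occurrences of $W_i$, the bialgebra axioms (coassociativity, naturality of the symmetry, associativity of $\mu$) bring the two duplicated copies of the first input next to each other, and~\eqref{eq:qualitative} in the form $\mu\circ\delta=\id_1$ then merges them back into the original wire, leaving exactly $W_i\phi'$. Granting this, let $R_1\sim R_2$ in $\FMR$. By the algorithm underlying the proof of Lemma~\ref{lemma:mrel-bij}, each $R_p$ is the interpretation of a canonical form of the shape~\eqref{eq:mrel-can}, which is completely determined by $m$, $n$ and the coefficients of $R_p$; as $R_1$ and $R_2$ have the same null coefficients, their canonical forms share the same blocks of $H$'s and $E$'s and, within each $W$-block, the same set of indices, differing at most in the multiplicities of those indices. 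Using $W_iW_i\approx W_i$ repeatedly collapses every $W$-block to its strictly decreasing core, which depends only on the support of the corresponding column, so the two canonical forms become literally the same word and $R_1\approx R_2$. Hence $\sim$ is contained in $\approx$; the two congruences agree, and the category generated by $\eqth{R}$ is isomorphic to $\FMR/\!\sim$, that is to $\Rel$.

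Alternatively, one can re-run the three lemmas of this section with $\eqth{R}$ in place of $\eqth{B}$: Lemma~\ref{lemma:mrel-precan} goes through verbatim, since its proof uses only relations of $\eqth{B}$; the rewriting system~\eqref{eq:mrel-cf-rs} is enlarged by the rule $W_iW_i\Longrightarrow W_i$, which leaves it terminating for the interpretation of Lemma~\ref{lemma:mrel-can} (one has $\intp{W_iW_i}=(3,3i)>(1,i)=\intp{W_i}$) and locally confluent once the extra critical pairs $W_iW_iW_i$, $HW_iW_i$ and $W_iW_iW_j$ with $i<j$ are checked joinable; and Lemma~\ref{lemma:mrel-bij} is adapted by replacing $\N$-valued coefficients with Boolean ones, so that the new normal forms --- the words of~\eqref{eq:mrel-can} with strictly decreasing $W$-blocks --- are in bijection with the relations $\Rel(m,n)$. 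On either route, the only step that is not a mechanical transcription of the proof for $\FMR$ is the verification that~\eqref{eq:qualitative} forces $W_iW_i\approx W_i$ (together with, on the second route, the re-check of confluence); that is where I expect whatever modest difficulty there is to lie.
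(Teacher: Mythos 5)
Your proposal is correct, and your second route is exactly what the paper does: it asserts that the precanonical forms are unchanged and that adding the rule $W_iW_i\Longrightarrow W_i$ to the rewriting system~\eqref{eq:mrel-cf-rs} keeps it normalizing, so that the resulting canonical forms (with strictly decreasing $W$-blocks) are in bijection with relations. Your first route, via the congruence on $\FMR$ generated by $R^\mu\circ R^\delta=\id_1$, is an equivalent repackaging that hinges on the same key verification --- that the qualitative axiom~\eqref{eq:qualitative} entails $W_iW_i\equiv W_i$ --- a step the paper leaves implicit under ``routine verifications'' but which you rightly single out as the only non-mechanical point.
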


\noindent
In particular, precanonical forms are the same and canonical forms are defined
by adding the rule
\begin{equation}
  \label{eq:rel-cf-rs}
  W_iW_i\Longrightarrow W_i
\end{equation}
to the rewriting system~\eqref{eq:mrel-cf-rs}, which remains normalizing.

\section{A game semantics for first-order causality}
\label{section:games-strategies}

Suppose that we are given a fixed first-order language~$\mathcal{L}$, that is
\begin{itemize}
\item a set of proposition symbols~$P,Q,\ldots$ with given arities,
\item a set of function symbols~$f,g,\ldots$ with given arities,
\item and a set of first-order variables~$x,y,\ldots$.
\end{itemize}
\emph{Terms}~$t$ and \emph{formulas}~$A$ are respectively generated by the
following grammars:
\[
\begin{array}{rcl}
  t&\qgramdef&x\gramor f(t,\ldots,t)
  \\
  A&\qgramdef&P(t,\ldots,t)\gramor\qforall{x}{A}\gramor\qexists{x}{A}
\end{array}
\]
(we only consider formulas without connectives here). We suppose that
application of propositions and functions always respect arities. Formulas are
considered modulo renaming of bound variables and substitution $A[t/x]$ of a
free variable $x$ by a term $t$ in a formula $A$ is defined as usual, avoiding
capture of variables. In the following, we sometimes omit the arguments of
propositions when they are clear from the context. We also suppose given a
set~$\axioms$ of \emph{axioms}, that is pairs of propositions, which is
reflexive, transitive and closed under substitution (so that the obtained logic
has the cut-elimination property). The logic associated to these formulas has
the following inference rules:
\[
\begin{array}{c@{\qquad}c}
  \inferrule{A[t/x]\vdash B}{\qforall x A\vdash B}{\lrule{$\forall$-L}}
  &
  \inferrule{A\vdash B}{A\vdash \qforall x B}{\lrule{$\forall$-R}}
  \displaybreak[0]
  \\
  &
  \text{(with $x$ not free in $A$)}
  \displaybreak[0]
  \\[2ex]
  \inferrule{A\vdash B}{\qexists x A\vdash B}{\lrule{$\exists$-L}}
  &
  \inferrule{A\vdash B[t/x]}{A\vdash \qexists x B}{\lrule{$\exists$-R}}
  \displaybreak[0]
  \\
  \text{(with $x$ not free in $B$)}
  &
  \displaybreak[0]
  \\[2ex]
  \inferrule{(P,Q)\in\axioms}{P\vdash Q}{\lrule{Ax}}
  &
  \inferrule{A\vdash B\\B\vdash C}{A\vdash C}{\lrule{Cut}}
\end{array}
\]

\paragraph{Games and strategies.}
\label{subsection:games-strategies}
Games are defined as follows.

\begin{definition}
  A \emph{game} $A=(\moves{A},\lambda_A,\leq_A)$ consists of a set~$\moves{A}$
  whose elements are called \emph{moves}, a function $\lambda_A$
  from~$\moves{A}$ to~$\{-1,+1\}$ which to every move $m$ associates its
  \emph{polarity}, and a partial order $\leq_A$ on moves, called
  \emph{causality} or \emph{justification}, which should be well-founded, \ie
  such that every move $m\in\moves{A}$ defines a finite downward closed set
  \[
  m\!\downarrow\qeq\setof{n\in\moves{A}\tq n\leq_A m}
  \]
  A move $m$ is said to be a \emph{Proponent move} when $\lambda_A(m)=+1$ and an
  \emph{Opponent move} otherwise.
\end{definition}
\noindent
The size~$\size{A}$ of a game~$A$ is the cardinal of its set of
moves~$\moves{A}$.

\begin{remark}
  More generally, games should be defined as event
  structures~\cite{winskel:event-structures} in order to be able to model
  additive connectives. We don't detail this here since we only consider
  formulas without connectives.
\end{remark}

If $A$ and $B$ are two games, their tensor product $A\otimes B$ is defined by
disjoint union on moves, polarities and causality:
\[
\moves{A\otimes B}=\moves{A}\uplus\moves{B}
\tcomma\quad
\lambda_{A\otimes B}=\lambda_A+\lambda_B
\qtand
\leq_{A\otimes B}=\leq_A\cup\leq_B
\]
The opposite game $A^*$ of the game $A$ is obtained from~$A$ by inverting
polarities of moves:
\[
A^*=(\moves{A},-\lambda_A,\leq_A)
\tdot
\]
Finally, the arrow game $A\llimp B$ is defined by
\[
A\llimp B\qeq A^*\otimes B
\tdot
\]
A game $A$ is \emph{filiform} when the associated partial order is total (we are
mostly interested in such games in the following).


\begin{definition}
  \label{def:strategy}
  A \emph{strategy} $\sigma$ on a game $A$ is a partial order $\leq_\sigma$ on
  the moves of $A$ which satisfies the two following properties:
  \begin{enumerate}
  \item \emph{polarity}: for every pair of moves $m,n\in\moves{A}$,
    \[
    m<_\sigma n
    \qqtimpl
    \lambda_A(m)=-1
    \qtand
    \lambda_A(n)=+1
    \]
  \item \emph{acyclicity}: the partial order $\leq_\sigma$ is compatible with
    the partial order of the game, in the sense that the transitive closure of
    their union is still a partial order (\ie is acyclic).
  \end{enumerate}
\end{definition}
The \emph{size} $\size{A}$ of a game $A$ is the cardinal of $\moves{A}$ and the
\emph{size} $\size{\sigma}$ of a strategy $\sigma:A$ is the cardinal of the
relation~$\leq_\sigma$.

\paragraph{A category of games.}
At this point it would be very tempting to build a category whose
\begin{itemize}
\item objects are games,
\item morphisms~$\sigma:A\to B$ are strategies on the game~$A\llimp B$.
\end{itemize}
The identity strategy~$\id_A:A'\to A$ (the apostrophe sign is only used here to
identify unambiguously the two copies of~$A$) would be the strategy such that
for every move~$m$ in~$A$ and~$m'$ in~$A'$, which are instances of a same
move~$m$, we have~\hbox{$m'\leq_{\id_A}m$} whenever~$\lambda_A(m)=+1$
and~$m\leq_{\id_A}m'$ whenever~$\lambda_A(m)=-1$ (it can easily be checked that
this definition satisfies the axioms for strategies). Now consider two
strategies~$\sigma:A\to B$ and~$\tau:B\to C$. The partial order~$\leq_\sigma$ on
the set~$\moves{A}\uplus\moves{B}$ is relation on~$\moves{A}\uplus\moves{B}$,
\ie a subset of~$(\moves{A}\uplus\moves{B})^2$, and similarly for~$\tau$. The
partial order~$\leq_{\tau\circ\sigma}$ corresponding to composite
$\tau\circ\sigma:A\to C$ of the two strategies~$\sigma$ and~$\tau$ would be
defined as the transitive closure of the relation~$\leq_\sigma\cup\leq_\tau$
on~$\moves{A}\uplus\moves{B}\uplus\moves{C}$ restricted to the
set~$\moves{A}\uplus\moves{C}$. It is easily checked that identities act as
neutral elements for composition. Similar ideas for composing strategies were in
particular developed in the appendix of~\cite{hyland-schalk:games-graphs}.

For example, consider the game~$A$ with two Proponent moves~$m_1$ and~$m_2$ and
the empty causality relation, the game~$B$ with two Proponent moves~$n_1$
and~$n_2$ and the causality relation~$n_1\leq_B n_2$, the strategy~$\sigma:A'\to
A$ such that~$m_1'\leq_\sigma m_2$ and~$m_2'\leq_\sigma m_1$ and the
strategy~$\tau:A\to B$ such that~$m_1\leq_\tau n_1$ and~$m_2\leq_\tau
n_2$. Their composite is the strategy~$\tau\circ\sigma:A'\to B$ such
that~$m_2'\leq_{\tau\circ\sigma}n_1$ and $m_1'\leq_{\tau\circ\sigma}n_2$. This
can be viewed graphically as follows:
\[
\hspace{-4ex}
\vxym{
  A'\ar[rrr]^\sigma&&&A\ar[rrr]^\tau&&&B\\
  m_1'\ar@/^4ex/[rrrr]&m_2'\ar@/^/[rr]&&m_1\ar@/_3ex/[rrr]&m_2\ar[drr]&&n_1\ar@{.>}[d]\\
  &&&&&&n_2\\
}
\qquad\rightsquigarrow\qquad
\vxym{
  A'\ar[rrr]^{\tau\circ\sigma}&&&B\\
  m_1'\ar[drrr]&m_2'\ar@/_/[rr]&&n_1\ar@{.>}[d]\\
  &&&n_2\\
}
\]
In the diagram above the dotted arrows represent the causal dependencies in the
games and solid arrows the dependencies in the strategies.

However, the composite of two strategies is not necessarily a strategy! For
example consider the game~$A$ defined as before excepted that~$m_1$ is now an
Opponent move, the game~$B$ defined as before excepted that~$n_2$ is now an
Opponent move, the strategy \hbox{$\sigma:0\to A$} (where $0$ denotes the empty
game) such that~$m_1\leq_\sigma m_2$ and the strategy \hbox{$\tau:A\to B$} such
that~$n_2\leq_\tau m_1$ and~$m_2\leq_\tau n_1$. Their ``composite'' is
\emph{not} a strategy because it does not satisfy the acyclicity property:
\[
\hspace{-1ex}
\vxym{
  0\ar[rrr]^\sigma&&&A\ar[rrr]^\tau&&&B\\
  &&&m_1\ar@/^/[r]&m_2\ar@/_/[rr]&&n_1\ar@{.>}[d]\\
  &&&&&&\ar@/^/[ulll]n_2\\
}
\qquad\rightsquigarrow\qquad
\vxym{
  0\ar[rrr]^{\tau\circ\sigma}&&&B\\
  &&&n_1\ar@{.>}[d]\\
  &&&\ar@/^4ex/[u]n_2\\
}
\]
This is a typical example of the fact that compositionality of strategies in
game semantics is often a subtle property that should be checked very carefully.

\begin{remark}
  \label{rem:ll-acyclicity}
  A more conceptual explanation of this compositionality problem can be given as
  follows. If we write~$P$ for the game with only one Proponent move, the
  game~$A$ should correspond, in a model of linear logic to either the tensor or
  the par of $P$ and $P^*$. However, we have not included in our strategies
  conditions which are necessary to distinguish between the interpretation of
  tensor and par. This explains why we are not able to recover the
  compositionality of the acyclicity property, which is deeply linked with the
  correctness criterion of linear logic. We leave a precise investigation of
  this for future works, in which we plan to extend our model to first-order
  linear logic.
\end{remark}

Fortunately, if we restrict the previous attempt of construction of a category,
by only allowing \emph{finite filiform games} as objects, then we actually
construct a category (\ie the composite of two morphisms is a morphism) that we
write~$\Games$. Moreover, we show that the connective-free fragment of
first-order propositional logic can be interpreted in this category and that the
conditions imposed on strategies characterize exactly the strategies
interpreting proofs (Theorem~\ref{thm:definability}).

We could give a direct proof of the fact that~$\Games$ is actually a
category. However, a direct proof of the fact that the composite of two acyclic
strategies is acyclic is combinatorial, lengthy and requires global reasoning
about strategies. This proof would show, by reductio ad absurdum, that if the
composite of two strategies contains a cycle (together with the causality of the
game) then one of the strategies already contains a cycle. So, it would moreover
not be very satisfactory in the sense that it would not be constructive. Instead
of proceeding in this way, we define the category~$\Games$ in an abstract
fashion, construct a presentation of this category, and conclude \emph{a
  posteriori} that in fact its only morphisms are strategies, which implies in
particular (Theorem~\ref{thm:composition}) that strategies do actually compose!

We first define a weaker notion of strategy
\begin{definition}
  A \emph{cyclic strategy}~$\sigma$ on a game~$A$ is a relation on the moves
  of~$A$, \ie a subset of~$\moves{A}\times\moves{A}$, such that
  \begin{enumerate}
  \item the relation~$\sigma$ is reflexive and transitive,
  \item \emph{polarity}: for every pair of moves $m,n\in\moves{A}$,
    \[
    m\mathop{\sigma}n
    \qtand
    m\neq n
    \qqtimpl
    \lambda_A(m)=-1
    \qtand
    \lambda_A(n)=+1
    \]
  \end{enumerate}
\end{definition}
In particular, every strategy is a cyclic strategy. From this definition it is
very easy to build a category~$\CGames$ whose
\begin{itemize}
\item objects are games,
\item morphisms~$\sigma:A\to B$ are strategies on the game~$A\llimp B$,
\item identities and composition are defined as above.
\end{itemize}
Since the definition of cyclic strategy is much weaker than the notion of
strategy, it is routine to check that the category is well-defined. We now
define the category~$\Games$ as the category generated in~$\CGames$ by finite
filiform games and strategies, \ie the smallest category whose
\begin{itemize}
\item objects are finite filiform games,
\item for every objects~$A$ and~$B$, and every strategy~$\sigma:A\llimp B$ in
  the sense of Definition~\ref{def:strategy}, we have that $\sigma$ is a
  morphism in~$\Hom(A,B)$,
\item for every objects~$A$, $B$ and~$C$, if~$\sigma$ is a morphism in
  $\Hom(A,B)$ and~$\tau$ is a morphism in~$\Hom(B,C)$ then their
  composite~$\tau\circ\sigma$ (in the category~$\CGames$) is a morphism
  in~$\Hom(A,C)$.
\end{itemize}
As mentioned above, we will show in Theorem~\ref{thm:composition} that the only
morphisms of this category are actually strategies.

\paragraph{A monoidal structure on $\Games$.}
If $A$ and $B$ are two games, the game $A\before{}B$ (to be read $A$
\emph{before} $B$) is the game defined as $A\lltens B$ on moves and polarities
and $\leq_{A\before{}B}$ is the transitive closure of the relation
\[
\leq_{A\lltens B}\cup\;\setof{(a,b)\tq a\in \moves{A}\tand b\in\moves{B}}
\]
This operation is extended as a bifunctor on strategies as follows. If
$\sigma:A\to B$ and $\tau:C\to D$ are two strategies, the strategy
$\sigma\before{}\tau:A\before{}C\to B\before{}D$ is defined as the relation
$\leq_{\sigma\before{}\tau}=\leq_\sigma\uplus\leq_\tau$.
This bifunctor induces a monoidal structure $(\Games,\before{},I)$ on the
category $\Games$, where $I$ denotes the empty game.

We write $O$ for a game with only one Opponent move and $P$ for a game with only
one Proponent move. It can be easily remarked that finite filiform games $A$ are
generated by the following grammar
\[
A\qqgramdef I\gramor O\before{}A\gramor P\before{}A
\]
A game $X_1\before{}\cdots\before{}X_n\before{}I$ where the $X_i$ are either $O$
or $P$ is represented graphically as
\[
\begin{array}{c}
  X_1\\
  \vdots\\
  X_n
\end{array}
\]
and a strategy $\sigma:A\to B$ is represented graphically by drawing a line from
a move $m$ to a move $n$ whenever $m\leq_\sigma n$. For example, the strategy
\hbox{$\mu^P:P\before{}P\to P$}
\[
\strid{mult_P}
\]
is the strategy on the game $(O\before{}O)\otimes P$ in which both Opponent move
of the left-hand game justify the Proponent move of the right-hand game. When a
move does not justify (or is not justified by) any other move, we draw a line
ended by a small circle. For example, the strategy \hbox{$\varepsilon^P:P\to
  I$}, drawn as
\[
\strid{counit_P_small}
\]
is the unique strategy from $P$ to the terminal object $I$. With these
conventions, we introduce notations for some morphisms which are depicted in
Figure~\ref{fig:inno-gen}.

\begin{figure}[htp!]
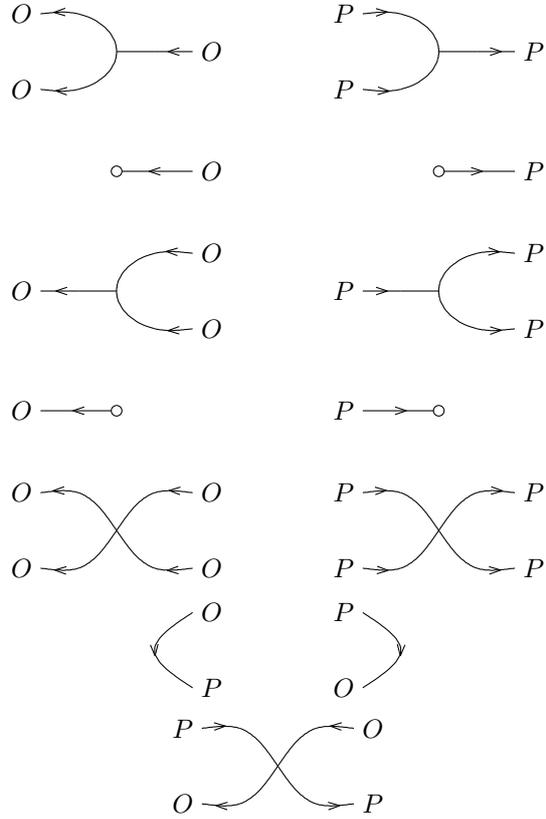

    \vbox{
      \[
        \begin{array}{c}
          \begin{array}{r@{\qcolon}l@{\quad}r@{\qcolon}l}
            \mu^O&O\before{} O\to O&\mu^P&P\before{} P\to P\\
            \eta^O&I\to O&\eta^P&I\to P\\
            \delta^O&O\to O\before{} O&\delta^P&P\to P\before{} P\\
            \varepsilon^O&O\to I&\varepsilon^P&P\to I\\
            \gamma^O&O\before{} O\to O\before{} O&\gamma^P&P\before{} P\to P\before{} P\\
            \eta^{OP}&I\to O\before{} P&\varepsilon^{OP}&P\before{} O\to I\\
          \end{array}
          \\
          \begin{array}{rcl}
            \gamma^{OP}&\colon&P\before{} O\to O\before{} P
          \end{array}
        \end{array}
      \]
      respectively drawn as
      \[
        \begin{array}{c}
          \begin{array}{c@{\qquad}c}
            \strid{mult_O}&\strid{mult_P}\\[4ex]
            \strid{unit_O}&\strid{unit_P}\\[4ex]
            \strid{comult_O}&\strid{comult_P}\\[4ex]
            \strid{counit_O}&\strid{counit_P}\\[4ex]
            \strid{sym_O}&\strid{sym_P}\\[4ex]
            \strid{unit_OP}&\strid{counit_OP}
          \end{array}
          \\
          \begin{array}{c}
            \strid{sym_OP}
          \end{array}
        \end{array}
      \]
    }
  \caption{Generators of the strategies.}
  \label{fig:inno-gen}
\end{figure}

\paragraph{A game semantics.}
A formula $A$ is interpreted as a filiform game~$\intp{A}$ by
\[
\intp{P}=I
\qquad
\intp{\qforall x A}=O\before{}\intp{A}
\qquad
\intp{\qexists x A}=P\before{}\intp{A}
\]
A cut-free proof $\pi:A\vdash B$ is interpreted as a strategy
$\sigma:\intp{A}\llimp\intp{B}$ whose causality partial order $\leq_\sigma$ is
defined as follows. For every Proponent move $P$ interpreting a quantifier
introduced by a rule which is either
\[
\inferrule{A[t/x]\vdash B}{\qforall x A\vdash B}{\lrule{$\forall$-L}}
\qqtor
\inferrule{A\vdash B[t/x]}{A\vdash \qexists x B}{\lrule{$\exists$-R}}
\]
every Opponent move $O$ interpreting an universal quantification $\forall x$ on
the right-hand side of a sequent, or an existential quantification $\exists x$
on the left-hand side of a sequent, is such that $O\leq_\sigma P$ whenever the
variable $x$ is free in the term $t$.
For example, a proof
\[
\inferrule{
\inferrule{
\inferrule{
\inferrule{\null}
{P\vdash Q[t/z]}{\lrule{Ax}}
}
{P\vdash\qexists z Q}{\lrule{$\exists$-R}}
}
{\qexists y P\vdash\qexists z Q}{\lrule{$\exists$-L}}
}
{\qexists x{\qexists y P}\vdash\qexists z Q}{\lrule{$\exists$-L}}
\]
is interpreted respectively by the strategies
\begin{equation}
  \label{eq:ex-intp}
  \sstrid{strat_ex_xy}
  \ \ \ 
  \sstrid{strat_ex_x}
  \ \ \ 
  \sstrid{strat_ex_}
\end{equation}
when the free variables of $t$ are $\{x,y\}$, $\{x\}$ or $\emptyset$.

\begin{remark}
  This interpretation could be generalized to proofs with cuts using the
  composition of the category~$\Games$, and one could show that the
  interpretation is invariant under cut-elimination. However, we do not detail
  this here since it is best expressed using connectives and leave this for
  future works.
\end{remark}


\paragraph{An equational theory of strategies.}
\label{subsection:walking-inno}
We can now introduce the equational theory which will be shown to present the
category~$\Games$.

\begin{definition}
  \label{definition:innocent-strategies}
  The \emph{equational theory of strategies} is the equational theory $\eqth{G}$
  with two atomic types $O$ and $P$ and thirteen generators depicted in
  Figure~\ref{fig:inno-gen} such that
  \begin{itemize}
  \item the Opponent structure
    \begin{equation}
      \label{eq:O-struct}
      (O,\mu^O,\eta^O,\delta^O,\varepsilon^O,\gamma^O)
    \end{equation}
    is a bicommutative qualitative bialgebra,
  \item the object $P$ is left dual to the object $O$ with $\eta^{OP}$ as unit
    and $\varepsilon^{OP}$ as counit,
  \item the Proponent structure
    $(P,\mu^P,\eta^P,\delta^P,\varepsilon^P,\gamma^P)$, as well as the
    morphism~$\gamma^{OP}$, are deduced from the Opponent structure
    \eqref{eq:O-struct} by composition with the duality morphisms~$\eta^{OP}$
    and~$\varepsilon^{OP}$, in the sense that the equations of
    Figure~\ref{fig:PO-adj} hold.
  \end{itemize}
\end{definition}
\begin{figure}[htp!]
  \centering
  \[
  \begin{array}{r@{\qeq}l}
    \strid{mult_P}
    &
    \strid{comult_O_adj}
    \\[4ex]
    \strid{comult_P}
    &
    \strid{mult_O_adj}
    \\[4ex]
    \strid{unit_P}
    &
    \strid{counit_O_adj}
    \\[4ex]
    \strid{counit_P}
    &
    \strid{unit_P_adj}
    \\[4ex]
    \strid{sym_P}
    &
    \strid{sym_O_adj}
    \\[10ex]
    \strid{sym_OP}
    &
    \strid{sym_O_adj_OP}
  \end{array}
  \]
  \caption{Proponent is left dual to Opponent.}
  \label{fig:PO-adj}
\end{figure}
\noindent
We write $\catquot{\mathcal{G}}$ for the monoidal category generated by
$\eqth{G}$. It can be noticed that the generators $\mu^P$, $\eta^P$, $\delta^P$,
$\varepsilon^P$, $\gamma^P$ and~$\gamma^{OP}$ are superfluous in this
presentation (since they can be deduced from the Opponent structure and
duality). However, removing them would seriously complicate the proofs.

\begin{remark}
  If we adopt the point of view of logic, the relations of
  Figure~\ref{fig:PO-adj} (as well as in fact all the relations of our
  presentation) can be understood as rules for cut-elimination. For example,
  suppose for clarity that function symbols include a nullary symbol~$0$, that
  proposition symbols include a nullary symbol~$\top$ and a binary symbol~$=$,
  and that the set~$\axioms$ of axioms contains the reasonable axioms for
  equality, \eg $(\top,x=x)\in\axioms$, etc. In the third equation of
  Figure~\ref{fig:PO-adj}, the left and right members are respectively the
  interpretation of the proofs
  \[
  \inferrule{
    \inferrule{\null}
    {\top\vdash 0=0}{\lrule{Ax}}
  }
  {\top\vdash\qexists x x=0}{\lruler\exists}
  \]
  and
  \[
  \inferrule{
    \inferrule{
      \inferrule{
        \inferrule{\null}
        {\top\vdash y=y}{\lrule{Ax}}
      }
      {\top\vdash\qexists z y=z}{\lruler\exists}
    }
    {\top\vdash\qforall y{\qexists z y=z}}{\lruler\forall}
    \\
    \inferrule{
      \inferrule{
        \inferrule{
          \inferrule{\null}
          {0=z\vdash z=0}{\lrule{Ax}}
        }
        {0=z\vdash\qexists x x=0}{\lruler\exists}
      }
      {\qexists z 0=z\vdash\qexists x x=0}{\lrulel\exists}
    }
    {\qforall y{\qexists z y=z}\vdash\qexists x x=0}{\lrulel\exists}
  }
  {\top\vdash\qexists x x=0}{\lrule{Cut}}
  \]
  and the second proof reduces to the first one by cut-elimination.
\end{remark}

\begin{lemma}
  With the notations of~\ref{definition:innocent-strategies}, we have:
  \begin{itemize}
  \item $(P,\mu^P,\eta^P,\delta^P,\varepsilon^P,\gamma^P)$ is a qualitative
    bicommutative bialgebra,
  \item the Yang-Baxter equalities
    \[
    \strid{yang_baxter_xyz_r}
    \qeq
    \strid{yang_baxter_xyz_l}
    \]
    hold whenever $(X,Y,Z)$ is either $(O,O,O)$, $(P,O,O)$, $(P,P,O)$ or
    $(P,P,P)$,
  \item the equalities
    \[
      \strid{mult_sym_rnat_P_l}
      =
      \strid{mult_sym_rnat_P_r}
    \]
    and
    \[
    \strid{mult_sym_lnat_O_l}
    =
    \strid{mult_sym_lnat_O_r}
    \]
    hold (and dually for comultiplications),
  \item the equalities
    \[
    \strid{eta_sym_rnat_P_l}
    =
    \strid{eta_sym_rnat_P_r}
    \]
    and
    \[
    \strid{eta_sym_lnat_O_l}
    =
    \strid{eta_sym_lnat_O_r}
    \]
    hold (and dually for counits),
  \item the equalities
    \[
    \strid{adj_counit_O_r}
    =
    \strid{adj_counit_O_l}
    \]
    and
    \[
    \strid{adj_counit_P_r}
    =
    \strid{adj_counit_P_l}
    \]
    hold (and dually for the counit of duality).
  \end{itemize}
\end{lemma}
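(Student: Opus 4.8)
The plan is to derive every equation in the statement by \emph{transporting}, along the duality between $P$ and $O$, an equation that is already available for the Opponent structure: either one of the axioms of a qualitative bicommutative bialgebra \eqref{eq:sym}--\eqref{eq:qualitative} applied to the structure \eqref{eq:O-struct}, or one of the zigzag equations for the duality $\eta^{OP}\dashv\varepsilon^{OP}$. The point is that, by Definition~\ref{definition:innocent-strategies} and Figure~\ref{fig:PO-adj}, all the Proponent generators (and $\gamma^{OP}$) are obtained from the Opponent generators by pre- and post-composition with the duality morphisms, i.e.\ by ``bending $P$-wires into $O$-wires''. So a typical proof step reads: draw the string diagram of the equation to be shown; replace each Proponent generator and each $\gamma^{OP}$ by its defining expression from Figure~\ref{fig:PO-adj}; slide the duality cups and caps so that the diagram is expressed purely in the Opponent structure, up to zigzags; apply the relevant $O$-axiom; and bend the wires back, using the snake equations to cancel the spurious cups and caps. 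It is convenient to first record, as small lemmas, the ``mate'' identities that make this routine — that bending the inputs of $\mu^P$ upwards yields $\delta^O$ up to snakes, that the mate of $\eta^P$ is $\varepsilon^O$, that the mate of $\gamma^P$ is $\gamma^O$, and so on — each being an immediate rephrasing of Figure~\ref{fig:PO-adj} with the zigzag equations.

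Granting these, I would treat the items one by one. For the first item, each axiom of a qualitative bicommutative bialgebra for $(P,\mu^P,\eta^P,\delta^P,\varepsilon^P,\gamma^P)$ is the mate of the \emph{dual} axiom for $(O,\mu^O,\eta^O,\delta^O,\varepsilon^O,\gamma^O)$: associativity of $\mu^P$ is the mate of coassociativity of $\delta^O$, the unit laws for $\mu^P$ the mates of the counit laws for $\delta^O$, commutativity of $\mu^P$ the mate of cocommutativity of $\delta^O$ (and dually for $\delta^P,\varepsilon^P$), while the four compatibility relations \eqref{eq:bialg} and the qualitative equation \eqref{eq:qualitative} for $P$ are the mates of the same equations for $O$, these notions being self-dual. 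The one delicate point is that, since $\mathcal{C}$ is not assumed symmetric, some mates come out with a symmetry of the ``wrong'' orientation or with the order of a tensor reversed; this is repaired by using that \eqref{eq:O-struct} is \emph{bi}commutative — both $\mu^O$ and $\delta^O$ absorb $\gamma^O$ — together with the mixed symmetry $\gamma^{OP}$, and this is precisely why bicommutativity rather than mere commutativity is assumed. The remaining items follow the same recipe: the mixed Yang--Baxter equalities are the mates of the Yang--Baxter equation for $\gamma^O$ in \eqref{eq:sym}; the naturalities of the symmetry with respect to $\mu^P,\mu^O$ (and, dually, the comultiplications) and with respect to the units/counits are the mates of the corresponding equations \eqref{eq:monoid-nat} for $O$; and the last ``adjunction'' equalities are obtained by unfolding the definition of $\gamma^{OP}$ and applying the zigzag equations once more. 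The restriction of the Yang--Baxter statement to the tuples $(O,O,O)$, $(P,O,O)$, $(P,P,O)$, $(P,P,P)$ is exactly what is imposed by which three-strand braids can be assembled from the available symmetry generators $\gamma^O$, $\gamma^P$, $\gamma^{OP}$ (there being no $O\otimes P\to P\otimes O$ generator), and each such braid is the mate of the corresponding all-$O$ braid.

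The work is essentially bookkeeping, and the main obstacle is just that: keeping track of the wire-bendings, and in particular of the orientations of the symmetries and the order of the tensor factors, through each manipulation, and checking in each case that the $O$-equation one lands on really is one of the assumed axioms — invoking bicommutativity of $O$ to reconcile orientations whenever needed. Since $\eqth{G}$ is a finite presentation there are only finitely many equations to verify, and after the bending each reduces to a single application of an $O$-axiom plus a snake equation, so no genuinely new combinatorics beyond that already used for $\eqth{B}$ is required.
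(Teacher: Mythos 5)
The paper states this lemma without any proof at all, treating it as a routine verification from Definition~\ref{definition:innocent-strategies}; your mate/transport argument along the $P\dashv O$ duality is exactly the intended way to fill that in, and your two key observations --- that bending wires reverses tensor order in the non-symmetric setting, which is absorbed by bicommutativity of the Opponent structure, and that the admissible Yang--Baxter triples are precisely those assemblable from $\gamma^O$, $\gamma^P$ and the one-directional $\gamma^{OP}$ --- are the genuinely non-trivial points and are correct. Nothing further is needed.
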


We can now proceed as in Section~\ref{section:presentation-rel} to show that the
theory~$\eqth{G}$ introduced in Definition~\ref{definition:innocent-strategies}
presents the category~$\Games$. First, in the category $\Games$ with the
monoidal structure induced by $\before{}$, the objects $O$ and $P$ can be
canonically equipped with thirteen morphisms as shown in
Figure~\ref{fig:inno-gen} in order to form a model of the theory~$\eqth{G}$.

Conversely, we need to introduce a notion of canonical form for the morphisms
of~$\mathcal{G}$. Stairs are defined similarly as before, but are now
constructed from the three kinds of polarized crossings $\gamma^O$, $\gamma^P$
and $\gamma^{OP}$ instead of simply~$\gamma$ in~\eqref{eq:ctx-S}: a \emph{stair}
is either $\id_O$ or~$\id_P$
or
\[
\strid{gsym_sym_O}
\tor
\strid{gsym_sym_P}
\tor
\strid{gsym_sym_OP}
\]
The notion of \emph{precanonical form}~$\phi$ is now defined inductively as
shown in Figure~\ref{fig:precan-strat},
\begin{figure}[htp!]
  \centering
    \vbox{
      $\phi$ is either empty or
      \[
        \begin{array}{c}
          A_i\phi'\quad=\sstrid{nf_adj}
          \qtor
          B_i\phi'\quad=\sstrid{nf_coadj}
          \\[37ex]
          \qtor
          H^X\phi'\quad=\sstrid{nf_eta}
          \qtor
          E^X\phi'\quad=\sstrid{nf_eps}
          \\[20ex]
          \qtor
          W_i\phi'\quad=\sstrid{nf_mu}
        \end{array}
      \]
    }
  \caption{Precanonical forms for strategies.}
  \label{fig:precan-strat}
\end{figure}
where the object~$X$ is either~$O$ or~$P$ and $\phi'$ is a precanonical
form. These cases correspond respectively to the productions of the following
grammar
\[
\phi
\qgramdef
Z
\sgramor
A_i\phi
\sgramor
B_i\phi
\sgramor
W_i\phi
\sgramor
E^X\phi
\sgramor
H^X\phi
\]
By induction on the size of morphisms, it can be shown that every morphism
of~$\mathcal{G}$ is equivalent to a precanonical form and a notion of canonical
form can be defined by adapting the rewriting system~\eqref{eq:mrel-cf-rs} into
a rewriting system for precanonical forms, by adding the rules
\[
\begin{array}{r@{\qquad\Longrightarrow\qquad}l@{\qquad}l}
  H^XW_i&W_{i+1}H^X\\
  H^XE^Y&E^YH^X\\
  W_iW_j&W_jW_i&\text{when $i<j$}\\
  W_iW_i&W_i\\
  H^XA_i&A_iH^X\\
  A_iW_j&W_jA_i\\
  A_iA_j&A_jA_i&\text{when $i<j$}\\
  A_iA_i&A_i\\
  E^XB_i&E^X\\
  B_iW_j&W_jB_i\\
  B_iB_j&B_jB_i&\text{when $i<j$}\\
  B_iB_i&B_i\\
  B_iA_j&A_jB_i\\
\end{array}
\]
to the rewriting system containing the rules~\eqref{eq:mrel-cf-rs}
and~\eqref{eq:rel-cf-rs}. It is simple to extend the proof of
Lemma~\ref{lemma:mrel-can} in order to show that this rewriting system is
normalizing. The general form for canonical forms is
\begin{equation}
  \label{eq:games-explicit-cf}
  \begin{array}{r@{}l}
    W_{i^n_{k_n}}\cdots W_{i^n_1}A_{j^n_{l_n}}\cdots A_{j^n_1}E\cdots&\cdots W_{i^1_{k_1}}\cdots W_{i^1_1}A_{j^1_{l_1}}\cdots A_{j
^1_1}E\\
    &\cdots B_{h^p_{m_p}}\cdots B_{h^p_1}H\cdots B_{h^1_{m_1}}\cdots B_{h^1_1}HZ
  \end{array}
\end{equation}
with
\begin{itemize}
\item $i^p_{k_p}>\ldots>i^p_1$ for every integer $r$ such that $1\leq r\leq k_n$,
\item $j^p_{l_p}>\ldots>j^p_1$ for every integer $r$ such that $1\leq r\leq l_n$,
\item $h^p_{l_p}>\ldots>h^p_1$ for every integer $r$ such that $1\leq r\leq m_n$.
\end{itemize}


\begin{lemma}
  Every strategy $\sigma:A\to B$ is the interpretation of an unique canonical
  form.
\end{lemma}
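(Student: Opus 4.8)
The plan is to mimic, for the theory $\eqth{G}$, the proof that $\eqth{B}$ presents $\FMR$. By the extensions of Lemmas~\ref{lemma:mrel-precan} and~\ref{lemma:mrel-can} to $\eqth{G}$ announced above, every morphism of $\mathcal{G}$ is $\equiv$-equivalent to a unique canonical form of the shape~\eqref{eq:games-explicit-cf}; so it is enough to prove that the interpretation map sends canonical forms $\phi:A\to B$ \emph{bijectively} onto the strategies $\sigma:A\to B$ in the sense of Definition~\ref{def:strategy}, i.e.\ onto the acyclic, polarity-respecting partial orders on the moves of $A\llimp B$. Exactly as Lemma~\ref{lemma:mrel-bij} did for multirelations, this splits into a surjectivity part (every strategy is the interpretation of some canonical form) and an injectivity part (no two canonical forms have the same interpretation).

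For surjectivity I would give an enumeration algorithm generalising the one displayed after Lemma~\ref{lemma:mrel-bij}. List the moves of $A\llimp B$ in the game order; recall that the Proponent moves of $A\llimp B$ are the $O$-moves of $A$ together with the $P$-moves of $B$, dually for the Opponent moves, and that a strategy is determined by recording, for each Proponent move, which Opponent moves justify it. The algorithm scans the Proponent moves and for each one emits an ``output'' generator ($H^P$ if the move lies in $B$, $H^O$ if it is the dual of a move of $A$, together with an $A_i$ whenever that move is linked through the $\eta^{OP}$ duality to a move of $A$), preceded by the $W_j$'s recording, \emph{in decreasing order of position}, the Opponent moves of $B$ that justify it; the remaining Opponent moves of $B$ and the moves of $A$ routed through the $\varepsilon^{OP}$ duality are handled symmetrically by the $E^X$ and $B_i$ generators. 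Because indices are emitted in decreasing order within each block, the word produced automatically obeys the ordering constraints of~\eqref{eq:games-explicit-cf}, so it is already canonical, and an induction on $\size{A}$ and on $\size{\sigma}$ --- following the three cases of the proof of Lemma~\ref{lemma:mrel-bij} --- shows that its interpretation is exactly $\sigma$.

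For injectivity I would exhibit the inverse map: a canonical form of the shape~\eqref{eq:games-explicit-cf} can be read back, generator after generator, as the description of the causality relation of a uniquely determined strategy on $A\llimp B$, the strict decreasing conditions on the $i$, $j$ and $h$ indices guaranteeing that distinct canonical words describe distinct relations; equivalently, applying the algorithm above to the strategy interpreting a canonical form returns that same word. Combined with the uniqueness of the canonical form inside each $\equiv$-class, this gives the desired bijection between strategies and canonical forms, which --- exactly as Lemma~\ref{lemma:mrel-bij} led to Theorem~\ref{thm:fmr-pres} --- is what makes the interpretation functor $\catquot{\mathcal{G}}\to\Games$ full and faithful, and is the step from which one eventually deduces that composites of strategies are again strategies.

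The step I expect to be the main obstacle is the combinatorics of the surjectivity part. Unlike in the relational case, the source and target of $\sigma$ carry polarities, an edge of $\sigma$ may join two moves lying on the same side of the turnstile via the duality generators, and one has to check that the three families $W_i$, $A_i$, $B_i$ express \emph{exactly} the acyclic polarity-respecting relations and nothing more --- in particular that no canonical form can encode a cyclic strategy, which is ultimately why the morphisms of the presented category are honest strategies rather than the cyclic strategies of $\CGames$. The underlying case analysis, by which of the thirteen generators of Figure~\ref{fig:inno-gen} is applied and at which position, has many more subcases than in Lemma~\ref{lemma:mrel-precan} because of the duality equations of Figure~\ref{fig:PO-adj}, but each subcase closes with a single diagrammatic rewrite as there.
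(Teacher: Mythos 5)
Your proposal follows essentially the same route as the paper: reduce the lemma to a bijection between canonical forms and strategies, prove surjectivity by an explicit inductive construction (the paper inducts on the triple $(\size{A},\size{\sigma},\size{B})$, peeling off one of $Z$, $H^X$, $E^X$, $W_i$, $A_i$, $B_i$ according to whether and where the first move of $A$ or of $B$ justifies another move), and injectivity by reading each canonical form back as a causality relation. One caveat: your bookkeeping of the generators is transposed relative to the paper's --- the canonical forms enumerate, for each move, the moves it \emph{justifies} ($E^X$ and $H^X$ introduce moves of $A$ and of $B$ respectively, $W_i$ records a dependency of a move of $B$ on a move of $A$, and $A_i$, $B_i$ record dependencies internal to $A$ and to $B$), not the justifiers of each Proponent move as you describe; this does not affect the architecture of the argument but would need to be corrected to match the canonical-form shape~\eqref{eq:games-explicit-cf} and the rewriting system that defines it.
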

\begin{proof}
  We show that every strategy~$\sigma:A\to B$ is the interpretation of a
  precanonical form~$\phi:A\to B$ by induction on the
  triple~$(\size{A},\size{\sigma},\size{B})$, ordered lexicographically.
  \begin{enumerate}
  \item If~$A=B=I$ then~$\sigma$ is the interpretation of the precanonical
    form~$Z$.
  \item If~$A=I$ and~$B=X\otimes B'$, where~$X$ is either~$P$ or~$O$ then we
    distinguish two cases.
    \begin{itemize}
    \item If no move depends on~$X$ in the strategy, this strategy is the image
      of a precanonical form~$H_X\phi'$, where $\phi'$ is a precanonical form,
      obtained by induction hypothesis whose interpretation is the strategy
      \hbox{$\sigma':I\to B'$} obtained by restricting~$\sigma$ to the
      codomain~$B$ (the size of~$\sigma'$ is~$\size{\sigma'}=\size\sigma$).
    \item Otherwise, we write~$i$ for the index in~$B$ of the move of minimal
      index which depends on~$X$ in the strategy. The strategy is the image of a
      precanonical form~$B_i\phi'$, where~$\phi'$ is precanonical form, obtained
      by induction hypothesis, whose interpretation is the
      strategy~$\sigma':I\to B$ obtained from~$\sigma$ by removing the
      dependency of the $i$-th move of~$B$ on the first move of~$B$ (its size is
      such that~$\size{\sigma'}<\size\sigma$).
    \end{itemize}
  \item If~$A=X\otimes A'$, where~$X$ is either~$P$ or~$O$, then we distinguish
    three cases.
    \begin{itemize}
    \item If no move depends on~$X$ in the strategy, this strategy is the image
      of a precanonical form~$E^X\phi'$, where~$\phi'$ is a precanonical form,
      obtained by induction hypothesis, whose interpretation is the strategy
      \hbox{$\sigma':A'\to B$} obtained by restricting~$\sigma$ to the
      domain~$A'$.
    \item If there exists a move of~$X$ which depends on~$X$, we write~$i$ for
      the index in~$A$ of such a move of minimal index. The strategy is the
      interpretation of a precanonical form~$A_i\phi'$, where~$\phi'$ is a
      precanonical form, obtained by induction hypothesis, whose interpretation
      is the strategy~$\sigma':A\to B$ obtained from~$\sigma$ by removing the
      dependency of the $i$-th move of~$A$ on the first move of~$A$ (its size is
      such that \hbox{$\size{\sigma'}<\size\sigma$}).
    \item Otherwise, there exists a move in~$B$ which depends on the
      move~$X$. We write~$i$ of the index in~$B$ of such a move of minimal
      index. The strategy is the interpretation of a precanonical
      form~$W_i\phi'$, where~$\phi'$ is a precanonical form, obtained by
      induction hypothesis, whose interpretation is the strategy~$\sigma':A\to
      B$, obtained from~$\sigma$ by removing the dependency of the $i$-th move
      of~$B$ on the first move of~$A$ (its size is such
      that~$\size{\sigma'}<\size\sigma$).
    \end{itemize}
  \end{enumerate}
  Knowing the general form~\eqref{eq:games-explicit-cf} of canonical forms, it
  is easy to show that the precanonical forms thus constructed are actually
  canonical and that canonical forms~$\phi:A\to B$ are in bijection with
  strategies~$\sigma:A\to B$, as in the proof of Theorem~\ref{thm:fmr-pres}.
\end{proof}

We therefore deduce the main theorem of this article:

\begin{theorem}
  \label{thm:pres-games}
  The monoidal category $\Games{}$ (with the~$\before{}$ tensor product) is
  presented by the equational theory $\eqth{G}$.
\end{theorem}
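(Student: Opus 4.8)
The plan is to follow the two-step recipe recalled after the discussion of presented categories as models, exactly as for Theorem~\ref{thm:fmr-pres}, but now using the canonical forms of Figure~\ref{fig:precan-strat} in place of those for bicommutative bialgebras: first check that $\Games$ is a model of $\eqth{G}$, then check that the induced functor is full and faithful.

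\textbf{$\Games$ is a model.} By the universal property of the free monoidal category on a signature, the thirteen strategies displayed in Figure~\ref{fig:inno-gen} determine a unique strict monoidal functor $M'\colon\mathcal{G}\to\Games$. I would then check that $M'$ sends the two sides of each relation of $\eqth{G}$ to the same strategy: the bicommutative qualitative bialgebra axioms for the Opponent structure~\eqref{eq:O-struct}, the two zig-zag identities making $P$ left dual to $O$, and the equations of Figure~\ref{fig:PO-adj}. Each of these is a finite, purely combinatorial check on the partial orders $\leq_\sigma$ defining the strategies at stake --- tedious but routine, just like the bialgebra axioms in $\FMR$. Consequently $M'$ factors as a strict monoidal functor $M\colon\catquot{\mathcal{G}}\to\Games$. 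Note moreover that $M$ is bijective on objects, since a finite filiform game is nothing but a finite word of polarities and $M$ sends the word $X_1\cdots X_n$ to $X_1\before\cdots\before X_n\before I$ (using that $I$ is the empty game and $\before$ is associative).

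\textbf{Canonical representatives.} By the induction on size indicated before Figure~\ref{fig:precan-strat}, every morphism of $\mathcal{G}$ is $\equiv$-equivalent to a precanonical form; the rewriting system obtained by augmenting~\eqref{eq:mrel-cf-rs} and~\eqref{eq:rel-cf-rs} with the displayed rules is normalizing, and each of its rules relates $\equiv$-equivalent precanonical forms, so every morphism of $\catquot{\mathcal{G}}$ is represented by a canonical form. By the lemma just proved, $M$ restricts to a bijection between canonical forms $A\to B$ and strategies on $M(A)\llimp M(B)$; in particular distinct canonical forms have distinct $M$-images and are therefore not $\equiv$-equivalent, so each morphism of $\catquot{\mathcal{G}}$ has a \emph{unique} canonical representative.

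\textbf{Full and faithful, and conclusion.} Faithfulness: if $M(f)=M(g)$, reduce $f,g$ to their canonical representatives $\bar f\equiv f$ and $\bar g\equiv g$; then $M(\bar f)=M(\bar g)$, hence $\bar f=\bar g$ by the bijection, hence $f\equiv g$. Fullness: every morphism of $\Games$ is by definition a finite composite of strategies; since each strategy is $M$ of a canonical form and $M$ preserves composition, every morphism of $\Games$ lies in the image of $M$, and being $M$ of some morphism of $\catquot{\mathcal{G}}$ it equals $M$ of a canonical form, i.e.\ is itself a strategy; so $\Hom_\Games(M(A),M(B))$ is precisely the set of strategies on $M(A)\llimp M(B)$, which the bijection identifies with the image of $\catquot{\mathcal{G}}(A,B)$. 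Thus $M$ is a full and faithful functor that is bijective on objects, hence a monoidal isomorphism, a fortiori a monoidal equivalence, so $\eqth{G}$ presents $\Games$. (The fullness argument also establishes Theorem~\ref{thm:composition}: composites of strategies are strategies.) The real difficulty is not in this assembly but in the ingredients imported above --- the reduction of every morphism of $\mathcal{G}$ to a precanonical form (a long case analysis extending Lemma~\ref{lemma:mrel-precan}, now also handling the duality generators and the polarized crossings), normalization of the extended rewriting system (extending Lemma~\ref{lemma:mrel-can}), and the bijection between canonical forms and strategies (the analogue of Lemma~\ref{lemma:mrel-bij} together with the matching injectivity count). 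The conceptual point to keep in view is the direction of the argument: since $\Games$ is \emph{defined} as the subcategory of $\CGames$ generated by genuine strategies, one does not --- and need not --- verify compositionality of strategies by hand, but transports the combinatorics through the presentation and reads compositionality off afterwards.
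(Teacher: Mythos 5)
Your proposal follows exactly the paper's route: it verifies that the thirteen generating strategies make $\Games$ a model of $\eqth{G}$, reduces morphisms of $\mathcal{G}$ to canonical forms via the precanonical forms of Figure~\ref{fig:precan-strat} and the extended normalizing rewriting system, and concludes fullness and faithfulness from the bijection between canonical forms and strategies, with Theorem~\ref{thm:composition} falling out as a corollary. This is the same assembly the paper performs, and you correctly locate the real work in the imported lemmas rather than in the final argument.
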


\noindent
As a direct consequence of this Theorem, we deduce the two following properties
which show the technical benefits of our construction.

\begin{theorem}
  \label{thm:composition}
  The composite of two strategies, in the sense of
  Definition~\ref{def:strategy}, is itself a strategy (in particular, the
  acyclicity property is preserved by composition).
\end{theorem}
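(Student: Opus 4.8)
The plan is to obtain this as an immediate corollary of the presentation theorem (Theorem~\ref{thm:pres-games}), which was established for the abstractly-defined category $\Games$ \emph{without} presupposing that composites of strategies are strategies. Recall that $\Games$ was defined as the smallest subcategory of $\CGames$ whose objects are the finite filiform games, whose morphisms include every strategy $\sigma:A\to B$ in the sense of Definition~\ref{def:strategy}, and which is closed under the composition of $\CGames$. Since the composition of $\CGames$ is exactly the operation described just before Definition~\ref{def:strategy} — form the transitive closure of $\leq_\sigma\cup\leq_\tau$ on $\moves{A}\uplus\moves{B}\uplus\moves{C}$ and restrict it to $\moves{A}\uplus\moves{C}$ — the composite $\tau\circ\sigma$ of two strategies $\sigma:A\to B$ and $\tau:B\to C$ is, by construction, a morphism of $\Games$. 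So it is enough to show that \emph{every} morphism of $\Games$ is a strategy in the sense of Definition~\ref{def:strategy}, and not merely a cyclic strategy.

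To prove that, I would first invoke the interpretation functor $M:\catquot{\mathcal{G}}\to\Games$ arising from the model structure on $O$ and $P$ depicted in Figure~\ref{fig:inno-gen}. By Theorem~\ref{thm:pres-games}, $M$ is full and bijective on objects, so any morphism $f:A\to B$ of $\Games$ may be written $f=M(g)$ for some morphism $g$ of $\catquot{\mathcal{G}}$. Using the reduction of every morphism of $\mathcal{G}$ to a precanonical form together with the normalization of the rewriting system on precanonical forms, I would then replace $g$ by a canonical form $\phi:A\to B$, so that $f=M(\phi)$.

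The last step is to note that $M(\phi)$ is automatically a genuine strategy: this is precisely the content of the lemma stating that canonical forms $\phi:A\to B$ are in bijection, via $M$, with strategies $\sigma:A\to B$ in the sense of Definition~\ref{def:strategy} — in particular the relation $M(\phi)$ satisfies the polarity and acyclicity conditions, not only those of a cyclic strategy. Hence $f=M(\phi)$ is a strategy; applying this to $f=\tau\circ\sigma$ gives the theorem, and in particular the acyclicity property is preserved by composition.

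I do not expect any genuine obstacle here; the one subtlety worth spelling out is the absence of circularity. One might fear that Theorem~\ref{thm:pres-games} already relies on composites of strategies being strategies, but it does not: the model structure lives in $\CGames$, where composition of cyclic strategies is unconditionally well defined, and the whole presentation argument (precanonical forms, normalization, the canonical-form/strategy bijection) is carried out for $\Games$ viewed as that particular subcategory of $\CGames$. It is this independently established presentation that forces, a posteriori, every composite occurring in $\Games$ to be acyclic, so nothing new has to be proved at this point.
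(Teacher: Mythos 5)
Your proposal is correct and follows essentially the same route as the paper: the paper's proof lifts $\sigma$ and $\tau$ to morphisms $\tilde\sigma,\tilde\tau$ of $\catquot{\mathcal{G}}$, uses functoriality to identify $\tau\circ\sigma$ with the image of $\tilde\tau\circ\tilde\sigma$, and concludes via the canonical-form/strategy bijection that this image is an acyclic strategy. Your more explicit treatment of fullness and of the non-circularity of the argument (the presentation being established inside $\CGames$ first) is just a careful unfolding of the same idea.
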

\begin{proof}
  Two strategies~$\sigma:A\to B$ and~$\tau:B\to C$ can be seen as
  morphisms~$\tilde{\sigma}$ and~$\tilde{\tau}$ the
  category~$\catquot{\mathcal{G}}$ and the image of their composite
  is~$\widetilde{\tau\circ\sigma}=\tilde{\tau}\circ\tilde{\sigma}$, which
  corresponds to the image of an unique acyclic strategy.
\end{proof}

\begin{theorem}
  \label{thm:definability}
  The strategies of~$\Games$ are definable (when the set~$\axioms$ of axioms is
  reasonably large enough): it is enough to check that generators are definable
  -- for example, the first case of~\eqref{eq:ex-intp} shows that~$\mu^P$ is
  definable.
\end{theorem}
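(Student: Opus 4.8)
The plan is to reduce the statement to a finite check via Theorem~\ref{thm:pres-games}. Since $\eqth{G}$ presents $\Games$, every strategy $\sigma:A\to B$ of $\Games$ is obtained from the thirteen generators of Figure~\ref{fig:inno-gen}, together with the identities $\id_O$ and $\id_P$, by repeated use of $\before$ and $\circ$. So it suffices to prove: (i) each generator, as well as $\id_O$ and $\id_P$, is definable; (ii) the class of definable strategies is closed under $\before$; (iii) it is closed under $\circ$. Beforehand I would record a normalisation remark fixing the formulas used: for a finite filiform game $A=X_1\before\cdots\before X_n\before I$, choose a closed atomic formula $R$ (this is where ``$\axioms$ reasonably large enough'' enters — e.g.\ it is enough that $\mathcal L$ has a nullary proposition symbol, so that $R\vdash R$ is an axiom by reflexivity of $\axioms$) and fresh variables $v_1,\ldots,v_n$, and set $\hat A=Q_1v_1.\cdots Q_nv_n.R$ with $Q_i=\forall$ when $X_i=O$ and $Q_i=\exists$ when $X_i=P$; then $\intp{\hat A}=A$. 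Moreover any cut-free proof realising a definable strategy $A\to B$ can be turned, by $\alpha$-renaming its bound variables and replacing its two end atoms by $R$ (harmless: as $R$ is closed all substitutions performed by the quantifier rules act trivially on it, the axiom leaf becomes the reflexivity instance $R\vdash R$, and the eigenvariable side conditions only become easier), into a proof $\pi:\hat A\vdash\hat B$ with the same interpretation. Thus every definable strategy is realised by a cut-free proof between the canonical formulas $\hat A,\hat B$ attached to its source and target games.

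For (i): in the connective-free fragment a cut-free proof is a single branch — a stack of the unary rules $\lrulel{\forall}$, $\lruler{\forall}$, $\lrulel{\exists}$, $\lruler{\exists}$ above one axiom leaf — so each generator is named by an explicit short proof whose interpreted justification order is read off from the witnesses used. The statement already supplies $\mu^P$ through the first case of~\eqref{eq:ex-intp}; likewise $\id_O:O\to O$ is the interpretation of the proof of $\qforall v_1R\vdash\qforall v_2R$ that applies $\lruler{\forall}$ and then $\lrulel{\forall}$ with witness $t=v_2$, closed by the axiom $R\vdash R$, so that the Opponent move $v_2$ justifies the Proponent move $v_1$. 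The Opponent generators $\mu^O,\eta^O,\delta^O,\varepsilon^O,\gamma^O$ and the duality generators $\eta^{OP},\varepsilon^{OP}$, as well as $\id_P$, are treated in the same routine way; by the equations of Figure~\ref{fig:PO-adj} and closure under $\circ$, the Proponent generators and $\gamma^{OP}$ then need no separate verification.

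For (ii): let $\pi:\hat A\vdash\hat B$ realise $\sigma:A\to B$ and $\rho:\hat C\vdash\hat D$ realise $\tau:C\to D$, with the bound variables of $\pi$ and of $\rho$ chosen disjoint. Write $\hat A[\hat C]$ for $\hat A$ with its atom $R$ replaced by $\hat C$, so $\intp{\hat A[\hat C]}=A\before C$, and similarly $\hat B[\hat D]$. Graft $\rho$ at the axiom leaf of $\pi$ and re-apply the quantifier-rule stack of $\pi$ (with the same witness terms): this produces a cut-free proof of $\hat A[\hat C]\vdash\hat B[\hat D]$ — the side conditions of $\lruler{\forall}$ and $\lrulel{\exists}$ survive because the bound variables of $\pi$ are fresh for $\hat C,\hat D$ — whose interpretation has exactly the justifications of $\sigma$ among the moves of $\hat A,\hat B$ and those of $\tau$ among the moves of $\hat C,\hat D$, and no others, since no witness of $\pi$ mentions a variable of $\hat C,\hat D$ and no witness of $\rho$ mentions a variable of $\hat A,\hat B$. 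Its causality is therefore $\leq_\sigma\uplus\leq_\tau$, i.e.\ it realises $\sigma\before\tau$.

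For (iii): extend the interpretation to proofs with cuts by $\intp{\lrule{Cut}(\pi,\rho)}=\intp{\rho}\circ\intp{\pi}$. Given definable $\sigma:A\to B$ and $\tau:B\to C$, realise them, by the normalisation remark, by cut-free proofs $\pi:\hat A\vdash\hat B$ and $\rho:\hat B\vdash\hat C$ sharing the literal middle formula $\hat B$; form $\lrule{Cut}(\pi,\rho):\hat A\vdash\hat C$ and eliminate the cut — possible since $\axioms$ is transitive and closed under substitution, so the logic has cut-elimination — to obtain a cut-free $\pi'$. Invariance of the interpretation under cut-elimination then gives $\intp{\pi'}=\intp{\rho}\circ\intp{\pi}=\tau\circ\sigma$, so $\tau\circ\sigma$ is definable. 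Combining (i)--(iii) with Theorem~\ref{thm:pres-games} proves the theorem. I expect (iii) to be the delicate step: it leans on the invariance of the interpretation under cut-elimination, which is only sketched in the present paper; one can alternatively dispense with it by giving a direct ``composition of proofs'' for this fragment — compose the witness assignments along the $\before$-order of $\hat B$ and merge the two justification orders — which is precisely the composition of $\Games$ transported to the level of proofs.
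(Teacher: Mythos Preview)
Your reduction via Theorem~\ref{thm:pres-games} and the three closure obligations is correct and is in fact more scrupulous than the paper, which simply asserts ``it is therefore enough to show that those strategies are definable'' and then exhibits, for each of the thirteen generators, an explicit cut\nbd free proof over a fixed rich axiom set (a unary proposition symbol~$I$ so that terms may be read as propositions, nullary~$\top,\bot$, and formal~$\land,\lor$ with the expected axiom closure). The paper thus spends its proof on~(i) and leaves~(ii)--(iii) implicit; you do the opposite, giving a clean grafting argument for~(ii) and reducing~(iii) to invariance under cut\nbd elimination, which you rightly flag as the part the paper only promises. Your normalisation step (replacing the end atoms by a single closed~$R$) is a genuinely nicer device than the paper's: it makes the eigenvariable checks and the disjoint\nbd variable bookkeeping in~(ii) transparent, and it shows that essentially no axioms beyond reflexivity are needed.

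There is one small gap in your setup that the paper's richer language hides. With a closed~$R$, the justifications in the interpretation are read off solely from the \emph{free variables of the witness terms}; for~$\mu^P$ (and symmetrically for~$\delta^O$) the single Proponent introduction must use one witness~$t$ with \emph{two} free variables (both Opponent eigenvariables), which requires that~$\mathcal L$ contain at least one function symbol of arity~$\ge 2$ (or, more generally, enough term structure to build a term with any prescribed finite set of free variables). You assume only a nullary proposition symbol; add this hypothesis on the term language and your generator clause~(i) goes through with~$t=f(x,y)$, etc. With that amendment your argument is complete, and the difference from the paper is one of emphasis: the paper trades extra axioms for explicit generator proofs, while you trade a tiny term\nbd language assumption for a uniform treatment and an explicit closure argument.
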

\begin{proof}
  Suppose that there is a countable number of variable symbols. Suppose moreover
  that there exists a unary propositional symbol~$I$, which enables us to see
  every term~$t$ as a proposition~$I(t)$, which we will simply write~$t$ by
  abuse of notation. We also suppose that the set of propositions contains two
  nullary propositions~$\top$ and~$\bot$ and is closed under formal conjunctions
  and disjunctions: if we have that~$P(x_1,\ldots,x_n)$ and~$Q(y_1,\ldots,y_m)$
  are propositions then \hbox{$P(x_1,\ldots,x_n)\land Q(y_1,\ldots,y_m)$} and
  $P(x_1,\ldots,x_n)\lor Q(y_1,\ldots,y_m)$ are also propositions. We then
  define a set~$\axioms$ of axioms as the smallest set of pairs of propositions
  which is reflexive, transitive and such that:
  \begin{itemize}
  \item for every proposition $P$,
    \begin{itemize}
    \item $(P,\top)\in\axioms$,
    \item $(\bot,P)\in\axioms$,
    \end{itemize}
  \item for every propositions $P$, $P_1$ and $P_2$,
    \begin{itemize}
    \item if $(P,P_1)\in\axioms$ and $(P,P_2)\in\axioms$ then $(P,P_1\land
      P_2)\in\axioms$,
    \item if $(P,P_1)\in\axioms$ or $(P,P_1)\in\axioms$ then $(P,P_1\lor
      P_2)\in\axioms$,
    \item if $(P_1,P)\in\axioms$ or $(P_2,P)\in\axioms$ then $(P_1\land
      P_2,P)\in\axioms$,
    \item if $(P_1,P)\in\axioms$ and $(P_2,P)\in\axioms$ then $(P_1\lor
      P_2,P)\in\axioms$.
    \end{itemize}
  \end{itemize}
  (for concision, we did not mention the arguments of propositions). By
  Theorem~\ref{thm:pres-games}, every strategy can be expressed as a tensor and
  composite of the generating strategies pictured in
  Figure~\ref{fig:inno-gen}. It is therefore enough to show that those
  strategies are definable.
  \begin{itemize}
  \item the strategies~$\mu^P$ and~$\eta^P$ are the respective interpretations
    of the proofs
    \[
    \inferrule{
      \inferrule{
        \inferrule{
          \inferrule{
            \null
          }
          {x\land y\vdash x\land y}{\lrule{Ax}}
        }
        {x\land y\vdash\qexists z z}{\lruler\exists}
      }
      {\qexists y {x\land y}\vdash\qexists z z}{\lrulel\exists}
    }
    {\qexists x{\qexists y {x\land y}}\vdash\qexists z z}{\lrulel\exists}
    \qtand
    \inferrule{
      \inferrule{
        \null
      }
      {\top\vdash\top}{\lrule{Ax}}
    }
    {\top\vdash\qexists x x}{\lruler\exists}
    \]
  \item the strategies~$\delta^P$ and~$\varepsilon^P$ are the respective
    interpretations of the proofs
    \[
    \inferrule{
      \inferrule{
        \inferrule{
          \inferrule{
            \null
          }
          {x\vdash x\land x}{\lrule{Ax}}
        }
        {x\vdash\qexists z{x\land z}}{\lruler\exists}
      }
      {x\vdash\qexists y{\qexists z{y\land z}}}{\lruler\exists}
    }
    {\qexists x x\vdash\qexists y{\qexists z{y\land z}}}{\lrulel\exists}
    \qtand
    \inferrule{
      \inferrule{
        \null
      }
      {x\vdash\top}{\lrule{Ax}}
    }
    {\qexists x x\vdash\top}{\lrulel\exists}
    \]
  \item the strategies~$\eta^{OP}$ and~$\varepsilon^{OP}$ are the respective
    interpretations of the proofs
    \[
    \inferrule{
      \inferrule{
        \inferrule{
          \null
        }
        {\top\vdash x\lor(x\lor\top)}{\lrule{Ax}}
      }
      {\top\vdash\qexists y{x\lor y}}{\lruler\exists}
    }
    {\top\vdash\qforall x{\qexists y{x\lor y}}}{\lruler\forall}
    \qtand
    \inferrule{
      \inferrule{
        \inferrule{
          \null
        }
        {x\land(x\land\bot)\vdash\bot}{\lrule{Ax}}
      }
      {\qforall y{x\land y}\vdash\bot}{\lrulel\forall}
    }
    {\qexists x{\qforall y{x\land y}}\vdash\bot}{\lrulel\exists}
    \]
  \item the strategies~$\gamma^P$ and~$\gamma^{OP}$ are the respective
    interpretations of the proofs
    \[
    \inferrule{
      \inferrule{
        \inferrule{
          \inferrule{
            \inferrule{
              \null
            }
            {x\land y\vdash x\land y}{\lrule{Ax}}
          }
          {x\land y\vdash\qexists t{t\land y}}{\lruler\exists}
        }
        {x\land y\vdash\qexists z{\qexists t{t\land z}}}{\lruler\exists}
      }
      {\qexists y{x\land y}\vdash\qexists z{\qexists t{t\land z}}}{\lrulel\exists}
    }
    {\qexists x{\qexists y{x\land y}}\vdash\qexists z{\qexists t{t\land z}}}{\lrulel\exists}
    \qtand
    \inferrule{
      \inferrule{
        \inferrule{
          \inferrule{
            \inferrule{
              \null
            }
            {x\land z\vdash x\land z}{\lrule{Ax}}
          }
          {x\land z\vdash\qexists t{t\land z}}{\lruler\exists}
        }
        {\qforall y{x\land y}\vdash\qexists t{t\land z}}{\lrulel\forall}
      }
      {\qexists x{\qforall y{x\land y}}\vdash\qexists t{t\land z}}{\lrulel\exists}
    }
    {\qexists x{\qforall y{x\land y}}\vdash\qforall z{\qexists t{t\land z}}}{\lruler\forall}
    \]
  \item etc.\qedhere
  \end{itemize}
\end{proof}

A given strategy is not necessarily the interpretation of a unique proof. In
particular, as explained in the introduction, two proofs which only differ by
the order of introduction of some successive connectives are identified in the
semantics.

In the preceding proof, we could of course have taken the set of all pairs of
propositions as set~$\axioms$ of axioms. The set that we have used shows however
that our definability result can be obtained with a reasonable set of axioms: it
is in particular \emph{coherent}, which means that there exists a sequent which
cannot be proved (the sequent $\top\vdash\bot$ for example), which would not
have been the case with the trivial set of axioms.

\section{Conclusion}
We have constructed a game semantics for first-order propositional logic and
given a presentation of the category~$\Games$ of games and definable
strategies. This has revealed the essential structure of causality induced by
quantifiers as well as provided technical tools to show definability and
composition of strategies.

We consider this work much more as a starting point to bridge semantics and
algebra than as a final result. The methodology presented here seems to be very
general and many tracks remain to be explored.

First, we would like to extend the presentation to a game semantics for richer
logic systems, containing connectives (such as conjunction or
disjunction). Whilst we do not expect essential technical complications, this
case is much more difficult to grasp and manipulate, since a presentation of
such a semantics would have generators up to dimension~3: games would be modeled
as trees of connectives and strategies as ``surface diagrams'' between these
trees. It would be particularly interesting to do this for the multiplicative
fragment of linear logic (MLL) with first-order quantifiers since it would
provide us with a local reformulation of the Danos-Regnier criterion for MLL
extended with the MIX rule (this is hinted in Remark~\ref{rem:ll-acyclicity}).

Some of the proofs (such as the proof of Lemma~\ref{lemma:mrel-precan}) are very
repetitive, which we think is a good point: we believe that they could be
mechanically checked or automated. It turns out that it is quite difficult to
find a good representation of morphisms in monoidal categories, which is
suitable for a computer to manipulate them without having to handle complex
congruences such as the exchange law. We have proposed such a representation as
well as an unification algorithm for monoidal rewriting
systems~\cite{mimram:rta10}, but many properties and generalizations of these
techniques remain to be investigated in order to have really useful
tools. Formulated in categorical terms this amounts to generalize term rewriting
techniques from Lawvere theories (which are categories with products, thus
monoidal categories, thus 2-categories with one object) to the general setting
of 2-categories. In particular, it would also be interesting to know whether it
is possible to orient the equalities in the presentations in order to obtain
strongly normalizing rewriting systems for the algebraic structures described in
the paper. Such rewriting systems are given in~\cite{lafont:boolean-circuits},
for monoids and commutative monoids, etc., but for example finding a strongly
normalizing rewriting system presenting the theory of bialgebras is a difficult
problem~\cite{mimram:phd}, not to mention a strongly normalizing presentation of
our category of games. Such a presentation would have a very high number of
critical pairs which makes us see the development of automated tools to compute
them a necessary preliminary step.

Finally, there is a striking analogy between the string diagrams we have used
and wires in electronic circuits. This is actually one of the starting point of
the current work of Ghica (as well as game semantics), who is currently
elaborating a compiler from a high-level language into integrated
circuits~\cite{ghica:geometry-synthesis}. The categorical string-diagrammatic
axioms reveal to be crucial in this setting in order to establish designing
principles for the circuits. Following this point of view, we believe that a
deep understanding of the algebraic structure of categories of semantics of
programming languages will prove very useful in order to design and optimize
circuits implementing programs in these languages.

\paragraph{Acknowledgments.}
I would like to thank Martin Hyland and Paul-André Melliès, as well as John
Baez, Albert Burroni, Jonas Frey, Yves Guiraud, Yves Lafont, François Métayer
and Luke Ong, for the lively discussion we had, during which I learned so much;
I also thank the anonymous referee for valuable suggestions.

\bibliographystyle{alpha}
\bibliography{these}

\end{document}